\documentclass[acmsmall,screen,nonacm]{acmart}
\usepackage[T1]{fontenc}
\usepackage{graphicx}%
\usepackage{multirow}%
\usepackage{amsmath,amsfonts}%
\allowdisplaybreaks

\setcopyright{none}

\usepackage{mathrsfs}%
\usepackage[title]{appendix}%
\usepackage{xcolor}%
\usepackage{textcomp}%
\usepackage{manyfoot}%
\usepackage{booktabs}%
\usepackage{comment}
\usepackage{algorithm}%
\usepackage{algorithmicx}%
\usepackage{algpseudocode}%
\usepackage{listings}%
\usepackage{microtype}
\usepackage{wrapfig}
\usepackage{subcaption}
\usepackage{amsthm}
\usepackage{enumerate}

\usepackage{kotex}
\usepackage[colorinlistoftodos,prependcaption,obeyFinal]{todonotes}

\newtheorem{theorem}{Theorem}
\newtheorem{definition}{Definition}

\presetkeys{todonotes}{inline, color=topaz!40, bordercolor=topaz}{}

\newcommand{\sysname}{SAVED} 

\title{On the Decompositionality of Neural Networks}
\author{Junyong Lee}
\affiliation{\institution{Yonsei University}\country{South Korea}}
\email{rememberiom@yonsei.ac.kr}

\author{Baek-Ryun Seong}
\affiliation{\institution{University of Seoul}\country{South Korea}}
\email{sizzflair97@uos.ac.kr}

\author{Sang-Ki Ko}
\affiliation{\institution{University of Seoul}\country{South Korea}}
\email{sangkiko@uos.ac.kr}

\author{Andrew Ferraiuolo}
\affiliation{\institution{Independent Researcher}\country{UK}\authornote{Now at Google DeepMind}}
\email{andrew@andrewferr.org}

\author{Minwoo Kang}
\affiliation{\institution{Gwangju Institute of Science and Technology}\country{South Korea}}
\email{fullbodywin3@gm.gist.ac.kr}

\author{Hyuntae Jeon}
\affiliation{\institution{Independent Researcher}\country{South Korea}}
\email{duliuli@naver.com}

\author{Seungmin Lim}
\affiliation{\institution{Independent Researcher}\country{South Korea}}
\email{rsm1021902@gmail.com}

\author{Jieung Kim}
\affiliation{\institution{Yonsei University}\country{South Korea}}
\email{jieungkim@yonsei.ac.kr}
\begin{document}

\begin{abstract}
Recent advances in deep neural networks have achieved state-of-the-art performance across vision and natural language processing tasks. In practice, however, most models are treated as monolithic black-box functions, limiting maintainability, component-wise optimization, and systematic testing and verification. Although various empirical approaches have explored pruning and network decomposition, the field still lacks a principled semantic notion of when a neural network can be meaningfully decomposed.

In this work, we introduce a formal notion of neural network decompositionality defined as a semantic-preserving abstraction over neural architectures. Our key insight is that decompositionality should be characterized by the preservation of semantic behavior along the model’s decision boundary, which determines classification outcomes. This perspective provides a semantic contract between the original model and its decomposed components, enabling a rigorous formulation of decompositionality.

Building on this formulation, we develop a boundary-aware decomposition framework, \sysname\ (\textbf{S}emantic-\textbf{A}ware \textbf{Ve}rification-Driven \textbf{D}ecomposition), that instantiates the proposed required concepts in the decompositionality formal definition. The framework combines counterexample mining over low logit-margin inputs, probabilistic coverage of the input space, and structure-aware pruning to construct candidate decompositions that preserve decision-boundary semantics.

We evaluate the framework across multiple model families, including CNNs, language Transformers, and Vision Transformers. Our empirical study reveals clear differences in their main tasks. Language Transformers largely preserve semantic boundaries under decomposition, whereas vision models frequently violate the proposed decompositionality criterion, suggesting intrinsic barriers to decomposition in many visual tasks. These results establish decompositionality as a formally definable and empirically testable property of neural networks, providing a principled foundation for modular reasoning about neural architectures and exposing architecture-specific limits that remain invisible under purely structural notions of decomposition.

\end{abstract}

\maketitle              

\section{Introduction}\label{sec:intro}

Deep learning has become a fundamental technology across a wide range of computing systems, including autonomous systems, medical decision support, and large-scale language services~\cite{LeCunBH15, Goodfellow-et-al-2016, otter20, Shamshad23, GrigorescuTCM20}. Despite their widespread deployment, neural networks remain difficult to engineer using established software engineering principles. In conventional software systems, modularity enables local reasoning. Systems can be decomposed into components that can be analyzed, modified, and reused independently while preserving global correctness. Neural networks, in contrast, are typically treated as monolithic functions. Although architectural constructs such as layers, channels, and attention heads introduce structural segmentation, it remains unclear whether these boundaries correspond to semantically meaningful units. As a result, many software engineering techniques including local verification, component-level repair, modular reuse, and systematic optimization do not transfer naturally to neural-network–based systems.

\paragraph{\textbf{The missing formal notion of neural decompositionality.}}
A growing body of work explores decomposing trained networks into modules, motivated by goals such as reuse, model compression, separation of concerns, or accelerating downstream analyses. Empirical results suggest that multi-class classifiers can sometimes be partitioned into class-wise or group-wise submodels while preserving test accuracy mainly on benchmark datasets~\cite{PanR20, PanR22, ImtiazBSPCR23, Ren23}. These efforts highlight the practical potential of neural network decomposition and raise important questions about its validity. However, they primarily investigate the phenomenon empirically and conceptually, rather than establishing a precise formal model of when such decompositions are semantically valid. Their evaluations rely primarily on test accuracy as the indicator of decomposition quality. While useful in practice, this introduces a heuristic assumption: that preserving test accuracy implies preservation of the model’s semantic behavior. However, test accuracy provides only a sparse approximation of the input domain and therefore cannot guarantee such preservation as described in Figure~\ref{fig:emperical-decomposition-and-true-decomposition}.  This limitation becomes particularly evident near decision boundaries, where the classifier’s semantic transitions occur. In particular, two models may achieve similar aggregate accuracy while exhibiting substantial differences in the geometry of their decision boundaries—the regions where class semantics transition. Consequently, existing work largely focuses on \emph{how} to decompose networks, while the more fundamental question remains open:
\textbf{\emph{``When does a neural network admit a semantically meaningful decomposition?''}}

\begin{wrapfigure}{r}{0.5\textwidth}
    \centering
    \includegraphics[width=0.5\textwidth]{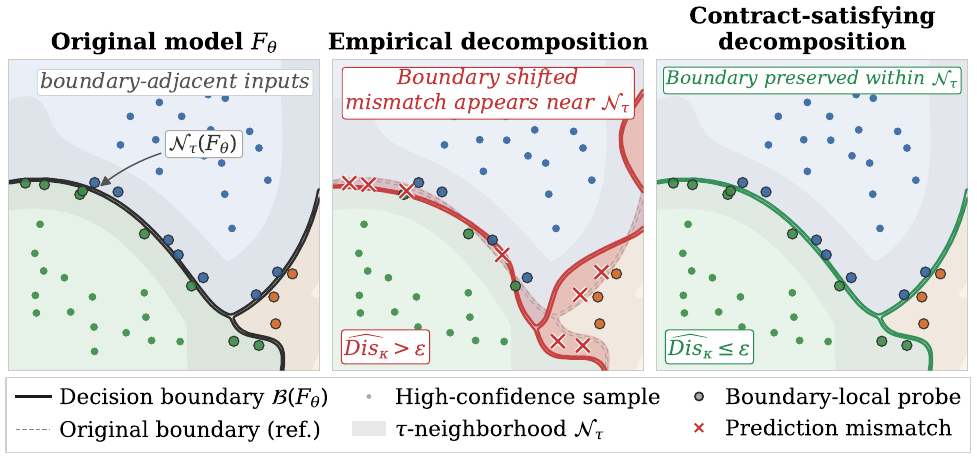}
    \caption{\textbf{Empirical vs. true decomposition.} Empirical decomposition can maintain aggregate performance while introducing behavioral inconsistencies near decision boundaries. True decomposition instead preserves boundary-local semantics, ensuring that decomposed components maintain the original model’s classification behavior around semantic transitions.}
    \label{fig:emperical-decomposition-and-true-decomposition}
\end{wrapfigure}

\paragraph{\textbf{Our position: Decompositionality is a property, not a procedure.}}
This work answers this question by reframing decomposition as an intrinsic, \emph{semantically grounded} property of a trained model. In this manner, we introduce \emph{neural decompositionality}, a formally specified and empirically testable contract that characterizes when a model can be decomposed into structurally distinct components while preserving semantic behavior. Our starting point is that for multi-class classifiers, semantics are encoded by the partition of the input space and, most critically, by the \emph{decision boundary}~\cite{KarimiTD19,KarimiD22,FawziMFS17,Oyallon17} where class transitions occur. Requiring global functional equivalence between the original and decomposed models is unnecessarily strong (and typically infeasible to establish), while relying on aggregate accuracy is too weak. We therefore target \emph{boundary-local semantic stability}: whether structured interventions on model components preserve class-wise behavior in neighborhoods of semantic transitions.

\paragraph{\textbf{Semantic boundary and boundary-local contracts.}}
We formalize decompositionality for multi-class classification networks through a boundary-based notion of semantic preservation. Concretely, we define the classifier’s decision boundary and its $\tau$-neighborhood $\mathcal N_\tau(F_\theta)$. Decompositionality is then specified as a \emph{semantic-structural contract} between a reference classifier $F_\theta$ and a decomposition $F_\theta^\downarrow$.
The contract imposes two orthogonal requirements.
First, \emph{boundary-aware semantic fidelity} requires that the aggregated decomposed predictor agrees with the reference model on $\mathcal N_\tau(F_\theta)$ up to $\varepsilon$ disagreement.
Second, \emph{structural divergence} requires that distinct components exhibit non-trivial representational divergence over the same boundary-relevant region: pairwise component overlap must remain below a threshold $\gamma$, and each component must be non-trivially smaller than the original model (controlled by $\eta$). This dual requirement is intentional. Semantic fidelity alone admits degenerate identity-like decompositions, whereas structural differentiation alone permits semantically incorrect partitions. Together, these conditions characterize decompositions that are both behaviorally meaningful and structurally non-redundant.  This contract therefore captures when a neural network admits a decomposition that preserves its semantic decision structure.

\begin{figure*}
    \centering
    \includegraphics[width=1.0\textwidth]{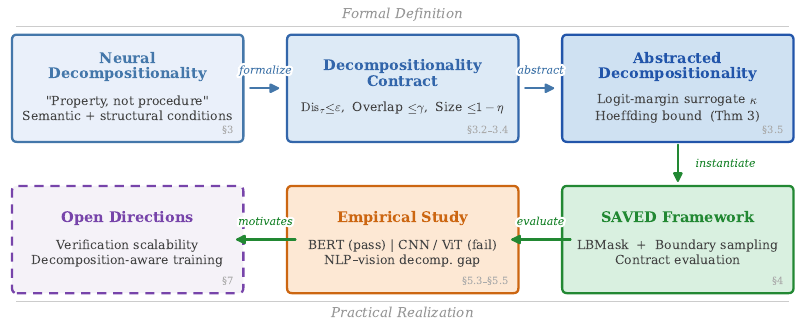}
    \caption{\textbf{Overview of neural decompositionality.}
        We view decompositionality as a semantic property of a trained model rather than a purely procedural transformation.
        Starting from a boundary-based semantic contract that characterizes when decomposition preserves the classifier’s decision structure,
        we derive an abstracted formulation that enables tractable reasoning over boundary-relevant regions.
        This abstraction is instantiated through \sysname, a boundary-aware decomposition framework that probes candidate decompositions using counterexample-guided boundary sampling.
        Finally, we evaluate the resulting decompositions across multiple architectures and demonstrate that when decompositionality holds, the resulting components improve the scalability of downstream neural network analysis tasks such as verification.}
    \label{fig:overview}
\end{figure*}

\paragraph{\textbf{From global property to analyzable abstraction.}}
Figure~\ref{fig:overview} illustrates the overall structure of our approach.
Starting from a boundary-based semantic contract for neural decompositionality,
we derive an abstracted formulation that enables tractable reasoning about boundary-local behavior.
This abstraction provides the theoretical foundation for \sysname,
a framework that operationalizes the proposed contract through boundary-aware probing and counterexample discovery.

While the preceding contract provides a precise semantic characterization of decompositionality, directly verifying such a property is challenging in practice. Decision boundaries and input domains are continuous and effectively unbounded, making it computationally infeasible to verify a universal decompositionality claim in its concrete form. To bridge the gap between semantic idealization and practical evaluation, we proceed in two stages. We first specify an idealized \emph{global} decompositionality contract defined directly over boundary neighborhoods under an input distribution. We then introduce \emph{abstracted decompositionality}, a soundly parameterized relaxation that preserves the semantic intent of the global definition while enabling tractable reasoning. This abstraction also remains a boundary-based concept. It concentrates analysis on boundary-relevant regions and supports executable procedures for estimating or falsifying decompositionality claims.

\paragraph{\textbf{From formal definition to practical realization.}}
A definition is most valuable when it admits a concrete theoretical foundation that can be realized in practice. Purely semantic characterizations clarify conceptual properties, but without a pathway toward realization, they remain difficult to apply to practical problems where neural network modularization could provide feasible benefits. To evaluate whether a candidate decomposition satisfies the boundary-local contract, we develop a boundary-aware counterexample probing strategy that prioritizes inputs with small logit margins while maintaining probabilistic coverage over the input space. Low-margin regions are where semantic violations are most likely to occur, yet coverage is necessary to avoid overfitting the probe to a narrow slice of the domain.

\paragraph{\textbf{What we learn across architectures and tasks.}}
To demonstrate our framework and validate the concept, we apply it to CNNs, NLP Transformers, and Vision Transformers (ViTs) across multiple classification tasks, examining when and where neural decompositionality emerges in practice. Our primary objective is to empirically assess whether boundary-local semantic preservation holds consistently across architectures. Our results reveal a clear and consistent pattern. Boundary-level semantic preservation is systematically achieved in language tasks, whereas vision tasks exhibit systematic semantic violations near decision boundaries. To better understand these discrepancies, we further analyze the behavior under diverse decomposition configurations. Importantly, when decompositionality holds, the resulting decomposed models yield tangible benefits for downstream analysis. In particular, decomposition significantly improves the scalability of neural network analysis in verification-oriented settings, as reasoning over smaller semantic components enables more efficient verification procedures compared to operating on the original monolithic model.

All in all, this paper makes the following contributions:
\begin{itemize}

    \item \textbf{A formal and tractable notion of neural decompositionality.}
          We introduce \emph{neural decompositionality}, a formally defined property that characterizes when a neural network admits a semantically meaningful decomposition. Our formulation models decompositionality as a semantic--structural contract associated with the classifier’s decision boundary. Under this view, decompositionality is instantiated through two complementary conditions, boundary-aware semantic fidelity and structural divergence among components. To enable practical reasoning, we further introduce \emph{abstracted decompositionality}, a boundary-centric relaxation that preserves the semantic intent of the original contract while allowing executable evaluation in continuous input spaces. To the best of our knowledge, this is the first work to formalize neural network decomposition through decision-boundary–centric semantic definitions.

    \item \textbf{A boundary-aware evaluation methodology.}
          We build a framework that faithfully reflects the proposed concept. The framework performs boundary-aware counterexample probing and learning-and-mask-based decomposition.

    \item \textbf{Empirical insights across architectures and tasks.}
          Through experiments on CNNs, NLP Transformers, and ViT across multiple classification tasks, we analyze when neural decompositionality emerges in practice. Our results reveal systematic differences across domains and show that, when decompositionality holds, decomposition leads to improved scalability for downstream neural network analysis.

\end{itemize}

The remainder of this paper is organized as follows.
Section~\ref{sec:background} presents the background and motivation for neural network decomposition and semantic reasoning about neural models.
Section~\ref{sec:decomposition} introduces the formal definition of neural decompositionality, first at the global level and then through its abstracted formulation.
Section~\ref{sec:saved} presents \sysname\ (\textbf{S}emantic-\textbf{A}ware \textbf{Ve}rification-Driven \textbf{D}ecomposition), our framework that operationalizes the proposed decompositionality verifier.
Section~\ref{sec:evaluation} reports the experimental evaluation and analysis using \sysname.
Section~\ref{sec:related-works} reviews related work, and Section~\ref{sec:conclusion} concludes the paper.
All core artifacts associated with this work are publicly available at \url{https://zenodo.org/records/19049545}.

\section{Background}
\label{sec:background}

This section introduces the basic concepts used throughout the paper.
We first describe the \textbf{functional formulation of neural network classifiers}.
We then discuss the \textbf{distinction between structural and semantic decomposition},
highlighting the limitations of existing structural approaches.
Finally, we explain how the \textbf{decision boundary} captures the semantic behavior of a classifier, motivating the boundary-based definition of neural decompositionality introduced in the next section.

\subsection{Neural Network Classifiers}

We model a neural network classifier as a function, following the conventions presented in multiple prior works~\cite{HuangKRSSTWY20, albarghouthi2021introduction, Goodfellow-et-al-2016, LeCunBH15}.
\[
    F_\theta : \mathcal{X} \rightarrow \mathbb{R}^{C},
\]
where \(\theta\) denotes the set of parameters of the network. For an input \(x \in \mathcal{X}\), the classifier produces a vector of logits
\[
    z_\theta(x) := F_\theta(x),
\]
where \(z_{\theta,i}(x)\) represents the logit associated with class \(i\). The predicted label is given by
\[
    \hat{y}_\theta(x) = \underset{i \in \{1,\dots,C\}}{\arg\!\max}\ z_{\theta,i}(x).
\]
which maps an input \(x \in \mathcal{X}\) to a vector of logits over \(C\) classes. Neural networks implement this mapping through a sequence of intermediate transformations.
For a network with \(L\) layers, we denote the representation at layer \(\ell\) by
\[
    h_\ell(x) \in \mathbb{R}^{d_\ell},
\]
where \(d_\ell\) denotes the dimension of the representation space at layer \(\ell\).
The network computation can therefore be expressed recursively as
\[
    h_\ell(x) = \phi_\ell(h_{\ell-1}(x)),
    \qquad h_0(x) = x,
\]
where each transformation \(\phi_\ell\) represents the operation performed at layer \(\ell\).
The final logits are produced from the last representation by a classifier head
\[
    z_\theta(x) = g(h_L(x)).
\]

This formulation highlights that, regardless of the architectural mechanisms used to implement them (e.g., convolution, attention~\cite{VaswaniSUAJPGP17}, or residual connections~\cite{He16}), neural networks ultimately define a function that maps inputs in \(\mathcal{X}\) to class scores in \(\mathbb{R}^C\).
From this perspective, reasoning about the behavior of a neural network amounts to understanding how its internal transformations collectively determine the resulting classification decision.

\subsection{Structural vs. Semantic Decomposition}


Many existing techniques implicitly attempt to decompose neural networks into smaller components. Examples include pruning methods that remove redundant parameters~\cite{Cheng24, Sun0BK24}, modular architectures such as mixture-of-experts models~\cite{JacobsJNH91, shazeer2017}, and task-specific specialization of subnetworks~\cite{frankle2019, PanR20, PanR22}. These approaches can be viewed as forms of \emph{structural decomposition}, where a model is partitioned or simplified according to architectural criteria. However, structural decomposition provides limited guarantees about how the functional behavior of the model changes. For instance, pruning or model compression techniques may alter the classifier's behavior in previously unseen regions of the input space, even if predictive accuracy is preserved on a validation dataset. As a result, structural transformations do not necessarily correspond to well-defined semantic relationships between the original and modified models.

This limitation contrasts with classical software systems. They rely heavily on \emph{semantic modularity}. Components interact through well-defined interfaces and behavioral contracts, enabling reasoning about system correctness through local analysis of individual modules. Such modular reasoning is largely absent in current neural-network systems, where decomposition techniques are typically guided by structural heuristics rather than formally defined semantic properties. These observations motivate the need for a semantic notion of decomposition for neural networks. Rather than focusing solely on architectural structure, a semantic decomposition should characterize when a model can be partitioned into components while preserving its functional behavior. The next section introduces a formal definition of decompositionality that captures this notion.

\subsection{Decision Boundaries and Semantic Behavior}

For classification models, semantic behavior is determined by how the classifier partitions the input space into regions associated with different
labels. These regions are separated by the model's \emph{decision boundaries}~\cite{FawziMFS17, KarimiD22, KarimiTD19, Oyallon17, DBLP:conf/iclr/XuSGGH23, ZhaoNG24}. They represent locations where small perturbations to the input may change the predicted class. Consequently, reasoning about semantic preservation under model transformations requires understanding how the decision boundary changes. Two models may achieve similar aggregate accuracy while exhibiting substantial differences in the geometry of their decision boundaries. Such differences may lead to inconsistent behavior in regions of the input space where class transitions occur. This observation motivates using decision boundaries as the reference structure for defining semantic preservation. In the next section, we introduce a formal definition of neural decompositionality that characterizes when a decomposition preserves the decision behavior of a classifier while producing structurally distinct components.

\section{Defining Decompositionality}
\label{sec:decomposition}

Building on the neural network classifier formulation introduced in Section~\ref{sec:background}, we now formalize the notion of \emph{decompositionality}.

\subsection{Decision Boundary}

Let \(F_\theta : \mathcal X \to \mathbb R^C\) be a classifier as defined
in Section~\ref{sec:background}, and let
\(\hat y_\theta(x)\) denote its predicted label.
The decision boundary characterizes the set of inputs at which the classifier becomes unstable with respect to class prediction.

\subsubsection{Pairwise Decision Boundary}

For each pair of distinct classes \(i \neq j\), the
\emph{pairwise decision boundary} between classes \(i\) and \(j\) is
defined as
\[
\mathcal B_{i,j}(F_\theta)
=
\left\{
x \in \mathcal X
\;\middle|\;
\forall \varepsilon > 0,\;
\exists x' \in \mathcal X
\text{ such that }
\|x-x'\|<\varepsilon
\text{ and }
\{\hat y_\theta(x),\hat y_\theta(x')\}=\{i,j\}
\right\}.
\]
Intuitively, \(\mathcal B_{i,j}(F_\theta)\) contains points at which an arbitrarily small perturbation can switch the prediction between classes \(i\) and \(j\). This directly follows the meaning of decision boundaries described in Section~\ref{sec:background}.

\subsubsection{Global Decision Boundary}

The decision boundary of a multiclass classifier can therefore be
constructed by combining the boundaries between all class pairs.
The global decision boundary of the classifier is the union of all
pairwise decision boundaries:
\[
\mathcal B(F_\theta)
=
\bigcup_{i \neq j}
\mathcal B_{i,j}(F_\theta).
\]
Thus, \(\mathcal B(F_\theta)\) contains all input points at which an
arbitrarily small perturbation may change the predicted label.

\subsection{Boundary-Aware Semantic Fidelity}
We want to characterize how well a decomposition performs relative to the original.
To do so, we must understand how often its decisions are changed relative to the original.
 The decision boundary identifies regions where semantic transitions occur. Away from the boundary, classifier predictions tend to be locally stable, meaning that small perturbations do not change the predicted class. Consequently, evaluating semantic preservation over the entire input space may obscure the regions where the classifier's behavior is most sensitive. To capture this phenomenon, we evaluate semantic fidelity in neighborhoods around the decision boundary of the original model.

\subsubsection{Boundary Neighborhood}

Let \(\mathcal B(F_\theta) \subseteq \mathcal X\) denote the decision boundary induced by the classifier \(F_\theta\). For a threshold \(\tau > 0\), we define the \(\tau\)-neighborhood of the boundary as
\[
\mathcal N_\tau(F_\theta)
=
\{x \in \mathcal X \mid \mathrm{dist}(x,\mathcal B(F_\theta)) \le \tau\},
\]
where
\[
\mathrm{dist}(x,S) = \inf_{z \in S} \rho(x,z)
\]
denotes the distance between a point \(x\) and a set \(S\) under a metric
\(\rho\), such as an \(\ell_p\) metric on \(\mathbb{R}^{d_x}\).




\subsubsection{Boundary-Aware Disagreement}

Let \(F_\theta\) denote the original classifier introduced in Section~\ref{sec:background}. We consider a decomposition of \(F_\theta\) into a collection of component predictors. Formally, let
\[
F_\theta^\downarrow(x)
=
A\big(m_1(x),\dots,m_K(x)\big)
\]
denote a decomposed model derived from \(F_\theta\), where each component
\(
m_k : \mathcal X \rightarrow \mathbb R
\)
produces a scalar score (e.g., a logit) associated with a designated positive class set \(P_k \subseteq [C]\). Intuitively, \(m_k(x)\) measures the affinity of the input \(x\) to the class subset \(P_k\). The aggregation operator \(A\) combines the component scores \(\{m_k(x)\}_{k=1}^K\) to produce the final multiclass prediction. Let \(\hat y_\theta(x)\) and \(\hat y_\theta^\downarrow(x)\) denote the predictions induced by the original classifier \(F_\theta\) and the decomposed model \(F_\theta^\downarrow\), respectively. Let \(\mathcal D\) be an input distribution over \(\mathcal X\). We define the boundary-aware conditional disagreement probability between the two models as
\[
\mathrm{Dis}_\tau\left(F_\theta^\downarrow \middle| F_\theta\right)
=
\Pr_{x \sim \mathcal D}
\left[
\hat y_\theta^\downarrow(x)
\neq
\hat y_\theta(x)
~\middle\vert~
x \in \mathcal N_\tau(F_\theta)
\right].
\]
This quantity measures the probability that the decomposed model
produces a different prediction from the original classifier
within the \(\tau\)-neighborhood of the decision boundary.
Intuitively, it quantifies how often the decomposition alters the
classifier's behavior in regions where semantic transitions occur.

\subsubsection{Boundary-Aware Semantic Fidelity}
We finally define semantic preservation under decomposition. This condition ensures that the decomposed model preserves the
classification behavior of the original model with high probability
within regions close to the decision boundary, where label transitions
are most sensitive.

\begin{definition}\label{def:boundary-aware-semantic-fidelity}
Let \(F_\theta\) be a classifier and \(F_\theta^\downarrow\) a decomposed model.
We say that \(F_\theta^\downarrow\) satisfies
\emph{\((\varepsilon,\tau)\)-boundary-aware semantic fidelity}
with respect to \(F_\theta\) if
\[
\mathrm{Dis}_\tau\left(F_\theta^\downarrow \middle| F_\theta\right)
\le
\varepsilon .
\]   
\end{definition}

\subsubsection{Empirical Estimation}
In practice, the disagreement probability can be estimated empirically
using samples drawn from the input distribution \(\mathcal D\).
Given a dataset \(S \sim \mathcal D\), one can approximate
\(\mathrm{Dis}_\tau(F_\theta^\downarrow \mid F_\theta)\) by measuring
prediction disagreements restricted to samples that lie within the
boundary neighborhood \(\mathcal N_\tau(F_\theta)\).
Standard evaluation metrics such as accuracy or class-wise F1-score
computed over this restricted set provide practical estimates of
boundary-aware semantic fidelity.

\subsubsection{Boundary Preservation Theorem}

The definition of boundary-aware semantic fidelity leads to the
following property.
If a decomposed model agrees with the original classifier in a
sufficiently small neighborhood of the decision boundary, then the two
models induce nearly identical classification behavior globally.

\begin{theorem}[Boundary Preservation]\label{thm:boundary-preservation}
Let \(F_\theta\) be a classifier and \(F_\theta^\downarrow\) a decomposed
model satisfying \((\varepsilon,\tau)\)-boundary-aware semantic fidelity
with respect to \(F_\theta\).
Suppose that the input distribution \(\mathcal D\) assigns at most
\(\delta\) probability mass to regions outside the boundary neighborhood
\(\mathcal N_\tau(F_\theta)\) where label disagreement may occur.
Then the overall disagreement probability satisfies
\[
\Pr_{x \sim \mathcal D}
\left[
\hat y_\theta^\downarrow(x)
\neq
\hat y_\theta(x)
\right]
\le
\varepsilon + \delta .
\]
\end{theorem}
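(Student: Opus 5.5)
The plan is to decompose the disagreement event according to whether the input lies inside or outside the boundary neighborhood, and to bound each piece separately. Write \(E = \{x : \hat y_\theta^\downarrow(x) \neq \hat y_\theta(x)\}\) for the disagreement event and \(N = \mathcal N_\tau(F_\theta)\). By the law of total probability,
\[
\Pr_{x\sim\mathcal D}[E] = \Pr[E \cap N] + \Pr[E \cap N^c].
\]
We bound the two terms in turn.

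\emph{Inside the neighborhood.} For the first term we use the definition of \(\mathrm{Dis}_\tau\) as a conditional probability. If \(\Pr[N]>0\), then
\[
\Pr[E\cap N] = \Pr[N]\cdot \mathrm{Dis}_\tau(F_\theta^\downarrow\mid F_\theta).
\]
Since \(\Pr[N]\le 1\), Definition~\ref{def:boundary-aware-semantic-fidelity} gives \(\Pr[E\cap N]\le \varepsilon\). If instead \(\Pr[N]=0\), the term vanishes trivially and the conditional probability need not even be defined. Bounding \(\Pr[N]\) by \(1\) is lossy, but it suffices for the stated \(\varepsilon+\delta\).

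\emph{Outside the neighborhood.} The second term is controlled by the hypothesis that \(\mathcal D\) assigns at most \(\delta\) mass to the part of \(N^c\) "where label disagreement may occur." The main obstacle is giving this informal phrase a precise meaning. I would commit to the reading \(\Pr[E\cap N^c]\le\delta\): the set of points outside \(N\) on which the two predictors may disagree is contained in some measurable region \(R\subseteq N^c\) with \(\mathcal D(R)\le\delta\). Then \(E\cap N^c\subseteq R\), so \(\Pr[E\cap N^c]\le\delta\).

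Two remarks belong here. First, under the weaker reading \(\Pr[N^c]\le\delta\) the same bound follows, since \(E\cap N^c\subseteq N^c\). Second, no geometric argument (for example, that predictions are locally constant away from \(\mathcal B(F_\theta)\)) is needed. That intuition only motivates why such a \(\delta\) should be small; it does not enter the proof. We would also note a measurability assumption: \(N\) is closed because \(\mathrm{dist}(\cdot,\mathcal B(F_\theta))\) is continuous, and \(E\) is measurable for piecewise-continuous predictors. This makes all the probabilities above well defined.

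Adding the two bounds yields \(\Pr[E]\le\varepsilon+\delta\), which is the claimed inequality.
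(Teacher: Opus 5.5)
Your proof is correct and is the same argument as the paper's: split the disagreement event over $\mathcal N_\tau(F_\theta)$ and its complement, bound the first piece by $\varepsilon$ via the fidelity condition and the second by $\delta$ via the hypothesis, then add. Your write-up is more explicit than the paper's, for instance in using $\Pr[E\cap N]=\Pr[N]\cdot\mathrm{Dis}_\tau(F_\theta^\downarrow\mid F_\theta)\le\varepsilon$ and in pinning down the $\delta$ hypothesis; one small wording slip is that $\Pr[N^c]\le\delta$ is a stronger (not weaker) hypothesis than $\Pr[E\cap N^c]\le\delta$, though you are right that it also yields the bound.
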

\begin{proof}
The total disagreement probability can be decomposed into two regions of
the input space: the boundary neighborhood \(\mathcal N_\tau(F_\theta)\)
and its complement.
\begin{itemize}
    \item Within the boundary neighborhood, disagreement is bounded by
\(\varepsilon\) by the definition of boundary-aware semantic fidelity.
\item Outside this region, disagreements can occur only with probability
bounded by the distributional mass \(\delta\).
\end{itemize}
Combining the bounds from the two regions yields the stated result.
\end{proof}

\subsection{Structural Divergence}
Semantic fidelity alone does not guarantee that a decomposition produces meaningful modular structure. A trivial construction could replicate the original network multiple times or produce components that reuse nearly identical internal representations. Such constructions preserve predictions but do not yield genuine modular decomposition. To capture the other part of decomposition, we introduce the notion of \emph{structural divergence}. 
This notion is agnostic to the specific decomposition mechanism and can be instantiated over parameters, neurons, or activation regions.
Intuitively, a valid decomposition should partition the internal computation of the model into components that rely on sufficiently distinct subsets of parameters or activations. Let \(M_k \subseteq \{1,\dots,N\}\) denote the set of active units (e.g., neurons, channels, or parameters) used by component \(m_k\), where \(N\) denotes the total number of units in the original network.

\subsubsection{Structural Disjointness.}

We require that different components operate on sufficiently distinct subsets of the model. 
Let \(S_i\) and \(S_j\) denote the structural support of components \(m_i\) and \(m_j\), respectively (e.g., parameters, neurons, or activation regions). 
We define the overlap ratio as
\[
\mathrm{Overlap}(S_i, S_j)
=
\frac{|S_i \cap S_j|}{|S_i \cup S_j|}.
\]
A decomposition satisfies structural disjointness if the following property holds for the threshold \(\gamma\):
\[
\mathrm{Overlap}(S_i, S_j) \le \gamma
\quad \forall i \neq j.
\]

\subsubsection{Non-Trivial Reduction.}
Each component must also represent a non-trivial reduction of the original model. Let
\[
\mathrm{Size}(S_k)
=
\frac{|S_k|}{|S|}
\]
denote the relative size of component \(m_k\), where \(S\) is the structural support of the original model. We require
\[
\mathrm{Size}(S_k) \le 1 - \eta
\]
for some threshold \(\eta > 0\).  Together with structural disjointness and non-trivial reduction, these conditions ensure that the resulting components are structurally distinct and non-trivially smaller than the original model. This formulation is agnostic to the specific decomposition mechanism and applies to a broad class of structural partitions.

\subsubsection{Empirical Estimation via Learned Masks.}

In practice, the structural support \(S_k\) of each component is not directly observable and must be approximated. 
We instantiate structural support using learned binary masks over model units, which provide a concrete and tractable representation of component-wise structure. Concretely, for each component \(m_k\), we associate a mask
\( M_k \in \{0,1\}^N \)
over the units of the original model, where \(M_k(i) = 1\) indicates that unit \(i\) is utilized by component \(m_k\). 
The structural support \(S_k\) is then defined as
\(
S_k := \{ i \mid M_k(i) = 1 \}.
\)
The masks \(\{M_k\}\) can be obtained through standard mask-learning or sparsification techniques, such as magnitude-based pruning~\cite{Han16}, gradient-based gating, or learned binary masking with straight-through estimators~\cite{Bengio13}. 
In practice, we jointly optimize the masks with respect to (i) task performance and (ii) sparsity or separation regularizers that encourage disjointness across components.












\subsection{Global Decompositionality}

We combine the two aforementioned concepts into a unified definition of decompositionality.

\begin{definition}\label{def:global-decomposition}
Let \(F_\theta\) be a neural network classifier and let
\[ F_\theta^\downarrow(x) = A(m_1(x), \dots, m_K(x)) \]
be a decomposition of \(F_\theta\) into components \(\{m_k\}_{k=1}^K\). We say that \(F_\theta\) satisfies \emph{$(\varepsilon,\tau,\gamma,\eta)$-global decompositionality} if the following conditions hold:
\begin{enumerate}
\item \textbf{Boundary-Aware Semantic Fidelity.}
The decomposed model \(F_\theta^\downarrow\) satisfies 
\[
\mathrm{Dis}_\tau\left(F_\theta^\downarrow \middle| F_\theta\right)
\le
\varepsilon.
\]
\item \textbf{Structural Disjointness and Reduction.}  
Let \(S_i\) denote the structural support of component \(m_i\), and let \(S\) denote that of the original model. The components satisfy
\[
\mathrm{Overlap}(S_i, S_j) \le \gamma
\quad \forall i \neq j\ \ \ \  \text{and} \ \ \ \ 
\frac{|S_k|}{|S|} \le 1 - \eta.
\]

\end{enumerate}
\end{definition}

This definition captures decompositionality as a joint property of \emph{semantic preservation} and \emph{structural separation}. The first condition ensures that the decomposition preserves classification behavior near decision boundaries, while the second enforces that components are both distinct and non-trivially reduced, ruling out degenerate identity-like decompositions.

\paragraph{Main Theorem.}

We show that global decompositionality yields both semantic and structural guarantees. Semantically, it preserves the behavior of the original classifier near its decision boundary and stabilizes standard evaluation metrics on boundary-restricted inputs. Structurally, it rules out degenerate decompositions in which components collapse to identical or near-complete copies of the original model.

\begin{theorem}[Semantic Preservation and Structural Non-Collapse]
\label{thm:semantic-preservation-and-structural-non-collapse}

Let \(F_\theta\) be a classifier and 
\(
F_\theta^\downarrow(x)=A(m_1(x),\dots,m_K(x))
\)
be a decomposition satisfying \((\varepsilon,\tau,\gamma,\eta)\)-global decompositionality. Then, the following properties hold:

\begin{enumerate}

\item \textbf{Boundary accuracy.}
\[
\Pr_{x\sim\mathcal D}
\left[
\hat y_{F_\theta^\downarrow}(x)=\hat y_{F_\theta}(x)
~\middle\vert~
x\in\mathcal N_\tau(F_\theta)
\right]
\ge 1-\varepsilon.
\]

\item \textbf{Boundary metric stability.}
Let \(C_\tau^\downarrow\), \(C_\tau\) denote the joint distributions of predicted and reference labels restricted to \(x\in\mathcal N_\tau(F_\theta)\). Then
\[
\left\|C_\tau^\downarrow - C_\tau\right\|_1 \le 2\varepsilon,
\]
since each disagreement affects at most two entries of the confusion matrix.

\item \textbf{Non-trivial reduction.}
\[
|S_k| \le (1-\eta)|S| \quad \forall k.
\]

\item \textbf{No structural collapse.}
\[
\mathrm{Overlap}(S_i,S_j) \le \gamma < 1 \quad \forall i\neq j.
\]

\item \textbf{Global disagreement.}
If off-boundary disagreement has mass at most \(\delta\), then
\[
\Pr_{x\sim\mathcal D}
\left[
\hat y_{F_\theta^\downarrow}(x)\neq \hat y_{F_\theta}(x)
\right]
\le \varepsilon+\delta.
\]

\end{enumerate}
\end{theorem}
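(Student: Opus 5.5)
The plan is to take each of the five items in turn and unfold Definition~\ref{def:global-decomposition}. Almost everything follows by restating or complementing its conditions. The only item with real content is the metric-stability bound (2), and even there the work is mostly fixing what the confusion matrices mean.

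For (1), condition 1 of Definition~\ref{def:global-decomposition} gives $\mathrm{Dis}_\tau(F_\theta^\downarrow\mid F_\theta)\le\varepsilon$. The event $\{\hat y^\downarrow_\theta(x)=\hat y_\theta(x)\}$ is the complement of the disagreement event. Both are taken under the same conditional measure $\Pr_{x\sim\mathcal D}[\,\cdot\mid x\in\mathcal N_\tau(F_\theta)]$, so its probability is at least $1-\varepsilon$. I will assume $\Pr_{\mathcal D}[\mathcal N_\tau(F_\theta)]>0$ so that the conditional is well defined, and state this explicitly. Items (3) and (4) are restatements of condition 2. For (3), multiply $|S_k|/|S|\le 1-\eta$ by $|S|$. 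For (4), the overlap bound is immediate, and the strict inequality $\gamma<1$ must be read as part of the hypotheses, i.e.\ the parameter regime in which the definition is meant to exclude identical supports. Indeed, $\mathrm{Overlap}(S_i,S_j)=1$ iff $S_i=S_j$ for nonempty supports, so $\gamma<1$ is exactly what rules out collapse. For (5), invoke Theorem~\ref{thm:boundary-preservation} directly, since condition 1 is precisely $(\varepsilon,\tau)$-boundary-aware semantic fidelity. For completeness I will recall the split
\[
\Pr[\hat y^\downarrow\neq\hat y]=\Pr[\hat y^\downarrow\neq\hat y,\,x\in\mathcal N_\tau]+\Pr[\hat y^\downarrow\neq\hat y,\,x\notin\mathcal N_\tau]\le\varepsilon\Pr[\mathcal N_\tau]+\delta\le\varepsilon+\delta .
\]

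For (2), I first need a concrete reading of $C_\tau^\downarrow$ and $C_\tau$. I will fix a reference label $Y$; the natural choice is the true label, or else $\hat y_\theta$ itself. I then take $C_\tau^\downarrow(a,b)=\Pr[\hat y^\downarrow=a,\,Y=b\mid\mathcal N_\tau]$ and $C_\tau(a,b)=\Pr[\hat y_\theta=a,\,Y=b\mid\mathcal N_\tau]$. Next, I write each entry as an expectation of indicators and use
\[
|C^\downarrow_\tau(a,b)-C_\tau(a,b)|\le\mathbb E\big[\,|\mathbf 1\{\hat y^\downarrow=a\}-\mathbf 1\{\hat y_\theta=a\}|\,\mathbf 1\{Y=b\}\mid\mathcal N_\tau\big].
\]
Summing over $(a,b)$, the inner sum equals $0$ on agreement. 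On disagreement it equals exactly $2$: one entry loses mass at $(\hat y_\theta,Y)$ and another gains mass at $(\hat y^\downarrow,Y)$. So $\|C^\downarrow_\tau-C_\tau\|_1\le 2\,\mathrm{Dis}_\tau\le 2\varepsilon$, which is the coupling (total-variation) bound.

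The main obstacle is this definitional step. The statement leaves $C_\tau$ and the reference label informal, so the argument must pin down a joint-distribution interpretation that uses the same $x$ and the same $Y$ for both models. Only then does the pointwise "at most two entries" argument become a valid $\ell_1$ bound. If instead the matrices were normalized empirical counts on a finite sample, the same argument applies verbatim with the empirical measure in place of $\mathcal D$.
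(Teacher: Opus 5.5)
Your proposal is correct and follows the same route as the paper's proof, which simply unfolds Definition~\ref{def:global-decomposition} item by item: items 1--2 come from boundary-aware fidelity, items 3--4 from the structural conditions, and item 5 from splitting $\mathcal X$ into $\mathcal N_\tau(F_\theta)$ and its complement. Your indicator computation giving $\|C_\tau^\downarrow - C_\tau\|_1 \le 2\,\mathrm{Dis}_\tau$, and your remarks that $\Pr_{\mathcal D}[\mathcal N_\tau(F_\theta)]>0$ and $\gamma<1$ must be standing assumptions (the definition itself does not force $\gamma<1$), make explicit what the paper's three-line sketch leaves implicit.
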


\begin{proof}
Items 1 and 2 follow from boundary-aware semantic fidelity, which bounds disagreement by \(\varepsilon\). 
Items 3 and 4 follow from structural divergence and reduction. 
Item 5 follows by partitioning the input space and composing disagreement.
\end{proof}

\subsection{From Global to Local Decompositionality}

The global notion of decompositionality relies on the boundary neighborhood
\(\mathcal N_\tau(F_\theta)\) defined over the entire input space \(\mathcal X\). However, this is generally intractable to evaluate due to two challenges. First, the decision boundary \(\mathcal B(F_\theta)\) is implicitly defined in a high-dimensional space. Second, semantic fidelity requires reasoning over all inputs within \(\mathcal N_\tau(F_\theta)\).  

To obtain a tractable formulation, we are inspired by two complementary lines of work. From \emph{local robustness}, we adopt the principle of restricting analysis to neighborhoods around decision boundaries, where semantic transitions occur and verification is most critical. From \emph{abstract interpretation}, we borrow the idea of replacing intractable global reasoning with a finite, tractable abstraction. Concretely, we introduce a dataset-level abstraction that approximates the boundary neighborhood using a finite sample. Unlike classical abstract interpretation, our goal is not to construct a sound over-approximation, but to provide a boundary-focused surrogate that captures semantically relevant behavior in practice.

\subsubsection{Dataset-Induced Boundary Abstraction}
Let \(S=\{x_1,\dots,x_n\}\subset\mathcal X\) be sampled from \(\mathcal D\). 
We define the boundary subset
\[
\widetilde{\mathcal B}_\tau(F_\theta,S)
=
\{x\in S \mid \mathrm{dist}(x,\mathcal B(F_\theta))\le\tau\},
\]
where 
\[\mathrm{dist}(x,\mathcal B(F_\theta)) := \inf_{x'\in \mathcal B(F_\theta)} \|x-x'\|.\] 
Since this distance is not directly computable, we approximate it using logit margins.

\subsubsection{Logit-Margin Approximation}

Let \(y=\hat y_{F_\theta}(x)\). Define
\[
m(x)
=
F_{\theta,y}(x)
-
\max_{j\neq y} F_{\theta,j}(x).
\]
With this, inputs with small margins lie near the decision boundary. 
We define the margin-induced boundary subset
\[
\widehat{\mathcal B}_\kappa(F_\theta,S)
=
\{x\in S \mid m(x)\le\kappa\}.
\]

\subsubsection{Local Semantic Fidelity}
\label{sec:local-semantic-fidelity}

Using the margin-induced boundary subset, we define the empirical disagreement as
\[
\widehat{\mathrm{Dis}}_\kappa\left(F_\theta^\downarrow \middle| F_\theta; S\right)
=
\frac{1}{|\widehat{\mathcal B}_\kappa(F_\theta,S)|}
\sum_{x\in\widehat{\mathcal B}_\kappa(F_\theta,S)}
\mathbf{1}
\left[
\hat y_{F_\theta^\downarrow}(x)
\neq
\hat y_{F_\theta}(x)
\right],
\]
where \(\mathbf{1}[\cdot]\) denotes the indicator function that evaluates to 1 if the condition holds and 0 otherwise. 
This quantity estimates the disagreement between the decomposed and reference models over inputs near the decision boundary.

\begin{definition}[Local Semantic Fidelity]\label{def:local-semantic-fidelity}
A decomposition satisfies $(\varepsilon,\kappa)$-local semantic fidelity on $S$ if
\[
  \widehat{\mathrm{Dis}}_\kappa\!\bigl(F^\downarrow_\theta \mid F_\theta;\, S\bigr) \;\leq\; \varepsilon.
\]
\end{definition}

\subsubsection{Local Decompositionality}

\begin{definition}[Local Decompositionality]\label{def:local-decompositionality}
A decomposition satisfies $(\varepsilon,\kappa,\gamma,\eta)$-local decompositionality if
\begin{enumerate}
  \item $(\varepsilon,\kappa)$-local semantic fidelity holds (Definition~\ref{def:local-semantic-fidelity}), and
  \item structural divergence holds with parameters $(\gamma,\eta)$.
\end{enumerate}
\end{definition}
This provides a practical instantiation of decompositionality on finite data.

\subsubsection{Connection to Global Decompositionality}

\begin{theorem}[Local-to-Global Fidelity Approximation]\label{thm:local-global}
Let \(S\) be sampled i.i.d.\ from \(\mathcal D\). Then with probability at least \(1-\delta\),
\[
\left|
\mathrm{Dis}_\tau\left(F_\theta^\downarrow \middle| F_\theta\right)
-
\widehat{\mathrm{Dis}}_\kappa\left(F_\theta^\downarrow \middle| F_\theta; S\right)
\right|
\le
\sqrt{\frac{\log(2/\delta)}{2|S|}} .
\]

Consequently, if
\(
\ \widehat{\mathrm{Dis}}_\kappa\left(F_\theta^\downarrow \middle| F_\theta; S\right)
\le \varepsilon,
\)
then
\[
\mathrm{Dis}_\tau\left(F_\theta^\downarrow \middle| F_\theta\right)
\le
\varepsilon
+
\sqrt{\frac{\log(2/\delta)}{2|S|}} .
\]

\end{theorem}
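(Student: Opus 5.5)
The plan is a direct Hoeffding argument. Two modeling identifications are implicit in the statement, and I will make them explicit first. The first identification is that the margin threshold \(\kappa\) is calibrated to \(\tau\), so that for every \(x\in S\) we have \(m(x)\le\kappa \iff x\in\mathcal N_\tau(F_\theta)\). Under this assumption \(\widehat{\mathcal B}_\kappa(F_\theta,S)=\widetilde{\mathcal B}_\tau(F_\theta,S)=S\cap\mathcal N_\tau(F_\theta)\). This is exactly the role of the logit-margin approximation. The second identification is that the sample size entering the bound is the number of boundary samples, i.e.\ \(S\) is understood as the boundary-filtered sample. Equivalently, the estimate uses \(n=|\widehat{\mathcal B}_\kappa(F_\theta,S)|\) in place of \(|S|\). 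Without these identifications the two quantities estimate different populations, and no bound of this form holds.

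Given these identifications, I introduce \(Z(x)=\mathbf 1[\hat y_{F^\downarrow_\theta}(x)\neq\hat y_{F_\theta}(x)]\in\{0,1\}\). By definition,
\[
\mathrm{Dis}_\tau(F^\downarrow_\theta\mid F_\theta)=\mathbb E_{x\sim\mathcal D}\bigl[Z(x)\mid x\in\mathcal N_\tau(F_\theta)\bigr]=:p .
\]
I then condition on the index set \(I\subseteq\{1,\dots,|S|\}\) of samples that fall in \(\mathcal N_\tau(F_\theta)\), with \(n=|I|\). Because the \(x_i\) are i.i.d.\ from \(\mathcal D\), the samples \(\{x_i\}_{i\in I}\) are, conditionally on \(I\), i.i.d.\ from \(\mathcal D\) restricted to \(\mathcal N_\tau(F_\theta)\). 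Hence the \(Z(x_i)\) for \(i\in I\) are i.i.d.\ Bernoulli\((p)\), and their mean is exactly \(\widehat{\mathrm{Dis}}_\kappa(F^\downarrow_\theta\mid F_\theta;S)\). Applying the two-sided Hoeffding inequality for \([0,1]\)-valued variables gives
\[
\Pr\bigl[\,|\widehat{\mathrm{Dis}}_\kappa-p|>t \,\big|\, I\,\bigr]\le 2e^{-2nt^2}.
\]
I choose \(t=\sqrt{\log(2/\delta)/(2n)}\), which makes the right-hand side equal to \(\delta\). Since this holds for every \(I\), it also holds unconditionally by the law of total probability. The second claim then follows from the triangle inequality: \(p\le\widehat{\mathrm{Dis}}_\kappa+t\le\varepsilon+t\).

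The main obstacle is the mismatch between the stated \(|S|\) and the random denominator \(|\widehat{\mathcal B}_\kappa|\), together with the \(\kappa\)/\(\tau\) gap. I would handle the first by the conditioning argument above. The resulting bound genuinely uses \(n=|\widehat{\mathcal B}_\kappa(F_\theta,S)|\le|S|\). The theorem's form is recovered exactly when \(S\) is drawn from the boundary region, for example by \sysname's margin-focused probing. Otherwise one must add a further Hoeffding/Chernoff step lower-bounding \(n\) by roughly \(\Pr[\mathcal N_\tau]\,|S|\), and use a union bound with a split \(\delta\).

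If \(\kappa\) is not perfectly calibrated, the symmetric difference between \(\{m\le\kappa\}\) and \(\mathcal N_\tau(F_\theta)\) contributes an additive bias. This bias is bounded by its conditional probability mass. I would state it as a separate approximation term that is zero under the calibration assumption.
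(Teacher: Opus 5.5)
Your argument uses the same tool as the paper's proof: Hoeffding's inequality for the $[0,1]$-valued indicator $\mathbf{1}[\hat y_{F_\theta^\downarrow}(x)\neq\hat y_{F_\theta}(x)]$. Yours is the more careful of the two. The paper's one-line proof asserts $\mathrm{Dis}_\tau=\mathbb{E}[X]$ and treats $\widehat{\mathrm{Dis}}_\kappa$ as the empirical mean of $X$ over all of $S$. It thereby silently makes the two identifications you state explicitly. It replaces the conditioning event $x\in\mathcal N_\tau(F_\theta)$ by the margin filter $m(x)\le\kappa$. It also replaces the random denominator $|\widehat{\mathcal B}_\kappa(F_\theta,S)|$ by $|S|$.

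Your conditioning on the index set $I$ is the right way to justify the i.i.d.\ step. Your conclusion is also right: the provable deviation is $\sqrt{\log(2/\delta)/(2|\widehat{\mathcal B}_\kappa(F_\theta,S)|)}$. With $|S|$ in the denominator the claim fails in general. When the disagreement rate on $\mathcal N_\tau(F_\theta)$ is bounded away from $0$ and $1$, the filtered mean fluctuates on the scale $1/\sqrt{\Pr_{\mathcal D}[\mathcal N_\tau(F_\theta)]\,|S|}$, which exceeds the stated bound.

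The paper's own numbers are consistent with your version. The corrections in Table~\ref{tab:rq1-main} are $h=0.0115$ and $h=0.2473-0.2125=0.0348$ at $\delta=0.05$. These correspond to $n\approx 1.4\times10^{4}$ and $n\approx 1.5\times10^{3}$. That is $20\%$ of the standard DBPedia-14 and AG News test sets ($70{,}000$ and $7{,}600$ examples), i.e.\ the size of the $q=0.2$ boundary subset rather than of $S$.

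Three points to tighten.

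First, state the calibration as an identity of sets: $\{x: m(x)\le\kappa\}=\mathcal N_\tau(F_\theta)$ up to $\mathcal D$-null sets, with $\kappa$ fixed independently of $S$. Stated only for the realized $x\in S$, it is an event about the sample. Relatedly, the experiments choose $\kappa$ as an empirical quantile, so $\widehat{\mathcal B}_\kappa$ depends on the whole sample. The conditional-i.i.d.\ step then needs either a held-out split to fix $\kappa$ or a bound that holds uniformly over thresholds.

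Second, exclude the event $I=\emptyset$, where $\widehat{\mathrm{Dis}}_\kappa$ is undefined, or assign it a trivial bound.

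Third, your remark that \sysname's margin-focused probing recovers the $|S|$ form is not correct. PGD-plus-bisection samples are not draws from $\mathcal D(\cdot\mid\mathcal N_\tau(F_\theta))$. Their empirical disagreement therefore concentrates around the disagreement rate of the probing distribution, not around $\mathrm{Dis}_\tau$. The $|S|$ form holds only when $S$ is itself i.i.d.\ from that conditional law.
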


\begin{proof}
Let
\[
X(x)=\mathbf{1}\left[\hat y_{F_\theta^\downarrow}(x)\neq \hat y_{F_\theta}(x)\right].
\]
Then \(\mathrm{Dis}_\tau = \mathbb{E}[X]\), while \(\widehat{\mathrm{Dis}}_\kappa\) is the empirical mean over \(S\). Since \(X\in[0,1]\), the result follows directly from Hoeffding's inequality.
\end{proof}

\section{SAVED: Decomposition Framework}\label{sec:saved}

\begin{figure}
    \centering
    \includegraphics[width=1\textwidth]{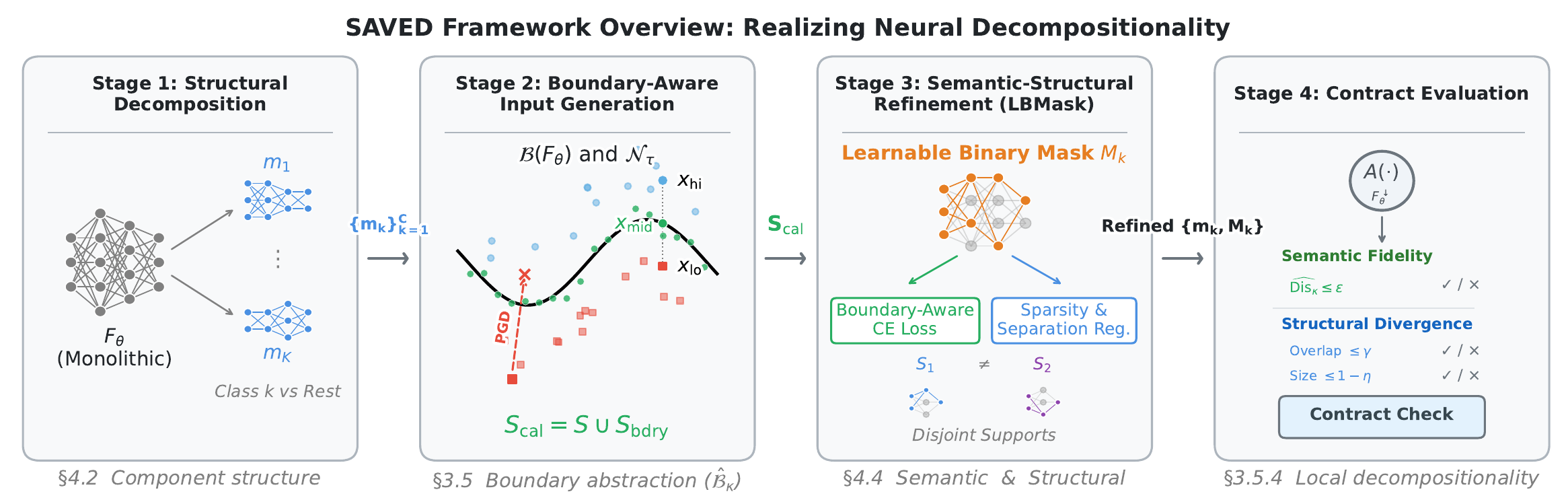}
\caption{\textbf{Overview of \sysname.}}
\label{fig:framework-overview}
\end{figure}

The formal definitions in Section~\ref{sec:decomposition} specify decompositionality as a semantic--structural property over the full input space. 
However, these definitions are not directly computable in practice. 
\sysname\ provides a \emph{practical realization} of this property on finite data by (i) constructing component-wise representations, (ii) approximating boundary-relevant inputs, (iii) learning masks that enforce semantic fidelity and structural separation, and (iv) evaluating the resulting decomposition against the local contract. 
This pipeline operationalizes decompositionality as a property that must be realized and verified on data, rather than assumed to hold by construction.

\subsection{Overview}

\sysname\ consists of four phases:
(1) \emph{structural decomposition},
(2) \emph{boundary-aware input generation},
(3) \emph{semantic-structural refinement (mask learning)}, and
(4) \emph{contract evaluation}.
These phases correspond directly to the abstract definition. 
Phase~1 constructs the component structure, Phase~2 approximates the boundary neighborhood, Phase~3 enforces semantic fidelity and structural separation, and Phase~4 evaluates whether the resulting decomposition satisfies the local contract.

\begin{algorithm}[t]
\caption{SAVED: Realizing Local Decompositionality}
\label{alg:saved}
\begin{algorithmic}[1]
\Require Classifier \(F_\theta : \mathcal{X} \to \mathbb{R}^C\), dataset \(S\), thresholds \((\varepsilon,\kappa,\gamma,\eta)\)
\Ensure Components \(\{m_k\}\), supports \(\{S_k\}\), contract report

\Statex
\Statex \textbf{Stage 1: Structural Decomposition}
\For{$k = 1, \ldots, C$}
    \State $m_k \leftarrow \textsc{DecomposeBinary}(F_\theta, k)$ \Comment{Define class-wise components}
\EndFor

\Statex
\Statex \textbf{Stage 2: Boundary-Aware Input Generation}
\State $S_{\mathrm{bdry}} \leftarrow \textsc{GenerateBoundaryData}(F_\theta, S)$
\Comment{PGD + binary refinement}
\State $S_{\mathrm{cal}} \leftarrow \textsc{Combine}(S, S_{\mathrm{bdry}})$
\Comment{Approximate boundary neighborhood}

\Statex
\Statex \textbf{Stage 3: Semantic-Structural Refinement}
\For{$k = 1, \ldots, C$}
    \State $M_k \leftarrow \textsc{LBMask}(m_k, S_{\mathrm{cal}})$
    \Comment{Learn structured mask}
    \State $m_k \leftarrow \textsc{ApplyMask}(m_k, M_k)$
    \Comment{Refined component}
\EndFor

\Statex
\Statex \textbf{Stage 4: Contract Evaluation}
\State $F_\theta^\downarrow \leftarrow \textsc{Aggregate}(\{m_k\})$
\State Compute $\widehat{\mathrm{Dis}}_\kappa(F_\theta^\downarrow \mid F_\theta; S)$
\State Compute $\mathrm{Overlap}(S_i,S_j)$ and $\mathrm{Size}(S_k)$
\State \Return contract report

\end{algorithmic}
\end{algorithm}

Algorithm~\ref{alg:saved} summarizes the \sysname\ pipeline. 
The procedure follows the abstract definition of decompositionality step by step. 
Stage~1 constructs class-wise components from the original model, defining the structural decomposition. 
Stage~2 generates boundary-aware inputs that approximate the decision boundary on finite data using gradient-based perturbations followed by binary refinement. 
Stage~3 learns structured masks over each component, refining them to preserve boundary-local behavior while inducing structural separation. 
Finally, Stage~4 evaluates the resulting decomposition using empirical measures of semantic disagreement and structural divergence, determining whether the local decompositionality criteria are satisfied. 
This procedure highlights that decompositionality must be realized through boundary-aware learning and subsequently verified through empirical evaluation. Detailed source code is available on our artifact page.

\subsection{Phase 1: Structural Decomposition}
Given \(F_\theta : \mathcal X \to \mathbb R^C\), we construct one binary component per class:
\[
m_k(x)=\bigl[F_{\theta,k}(x), \max_{j\neq k} F_{\theta,j}(x)\bigr].
\]
The aggregated predictor is
\[
\hat y_{F_\theta^\downarrow}(x)=\arg\max_k m_k(x)_0.
\]
This phase defines the \emph{structural} units of decomposition without altering behavior.

\subsection{Phase 2: Boundary-Aware Input Generation}

The abstract definition relies on the boundary neighborhood \(\mathcal N_\tau(F_\theta)\), which is not directly observable. 
\sysname\ approximates this set using data. A natural approach is to use gradient-based adversarial methods (e.g., PGD~\cite{MadryMSTV18}) to generate label-flipping inputs.  However, PGD alone is insufficient. It produces samples that cross the decision boundary but may lie far from it. To obtain tighter boundary approximations, we apply a \emph{binary search refinement} procedure~\cite{KarimiD22}. Given a pair \((x_{\mathrm{lo}}, x_{\mathrm{hi}})\) with different predictions, we iteratively calculate:
\[
x_{\mathrm{mid}} = \tfrac{1}{2}(x_{\mathrm{lo}} + x_{\mathrm{hi}}),
\]
and retain the pair that still straddles the boundary.
This yields samples that lie arbitrarily close to \(\mathcal B(F_\theta)\). The resulting boundary samples are combined with the original dataset to form a calibration set that approximates \(\mathcal N_\tau(F_\theta)\).
This step instantiates the boundary-centric semantic definition discussed in Section~\ref{sec:decomposition}.

\subsection{Phase 3: Semantic-Structural Refinement (LBMask)}

Each component \(m_k\) is pruned via a learned binary mask \(M_k\) applied to the frozen pre-trained weights, i.e., the underlying model parameters are not updated during refinement.

\subsubsection{Objective}
Mask learning is designed to realize local decompositionality by jointly enforcing
(i) semantic fidelity on boundary-relevant inputs and
(ii) structural separation across components.

\subsubsection{Parameterization}
For each layer \(\ell\), we introduce mask logits \(\lambda_\ell\) and obtain mask probabilities via a sigmoid. 
Binary masks are produced during training using a straight-through estimator~\cite{Bengio13}, enabling gradient-based optimization while maintaining discrete structure.

\subsubsection{Structured masking.}
We adopt structured masks~\cite{He24,Li17} (per-neuron or per-channel) to align with the notion of structural support in the decompositionality definition. 
While unstructured masks can preserve predictive behavior, they typically yield entangled sparsity patterns and fail to produce clearly separated components. 
In particular, unstructured pruning often leads to high overlap between component supports, violating the structural disjointness condition (i.e., large \(\mathrm{Overlap}(S_i,S_j)\)), even when predictive performance is maintained. 
As a result, such decompositions fail to satisfy the structural side of the decompositionality contract.

\subsubsection{Architectural realization}
Structured masking removes entire computational units, which can introduce dimensional inconsistencies in architectures with sequential or residual connections~\cite{He16}. 
In particular, pruning output channels in layer~$\ell$ changes the expected input dimension of layer~$\ell{+}1$, and the mismatch propagates through subsequent layers; normalization layers further require consistent slicing of their parameters. 
To address this, we apply a post-hoc \emph{dimension surgery} pass after mask binarization that removes pruned units and propagates the resulting dimensions through the network. 
This includes synchronizing masks across skip connections, slicing adjacent weights and biases, and adjusting normalization parameters. 
After surgery, each component becomes a self-contained sub-network with consistent dimensions. 
This step is essential for CNN-style architectures, whereas in models with self-contained intermediate dimensions (e.g., Transformers), structured masking often yields valid sub-networks without explicit surgery.

\begin{algorithm}[t]
\caption{LBMask (per component \(m_k\))}
\label{alg:lbmask}
\begin{algorithmic}[1]
\Require Component \(m_k\), boundary dataset \(S_{\mathrm{cal}}\)
\Ensure Mask \(M_k\)
\State Initialize mask logits \(\lambda\) (uniform, magnitude-based, etc.)
\For{training steps}
    \State Sample batch \(B \subset S_{\mathrm{cal}}\)
    \State Compute masked forward pass of \(m_k\)
    \State Compute loss: boundary-aware cross-entropy + sparsity regularization
    \State Update \(\lambda\) via SGD
\EndFor
\State Binarize mask to obtain \(M_k\)
\State \Return \(M_k\)
\end{algorithmic}
\end{algorithm}

\subsection{Phase 4: Contract Evaluation}

Given the learned components, we evaluate local decompositionality (Definition~\ref{def:local-decompositionality}).

\paragraph{Semantic fidelity.}
We compute empirical disagreement:
\[
\widehat{\mathrm{Dis}}_\kappa\left(F_\theta^\downarrow \middle| F_\theta; S\right)
=
\frac{1}{|\widehat{\mathcal B}_\kappa|}
\sum_{x\in\widehat{\mathcal B}_\kappa}
\mathbf{1}\left[
\hat y_{F_\theta^\downarrow}(x)\neq \hat y_{F_\theta}(x)
\right].
\]

\paragraph{Structural conditions.}
We measure \(\mathrm{Overlap}(S_i,S_j)\) and \(\mathrm{Size}(S_k)\). A decomposition satisfies local decompositionality if all conditions meet thresholds \((\varepsilon,\kappa,\gamma,\eta)\).


\newcommand{\tbd}[1]{\textcolor{gray}{\texttt{--}}}
\newcommand{\finding}[1]{\medskip\noindent\fbox{\parbox{0.96\linewidth}{\textbf{Finding.} #1}}\medskip}

\section{Evaluation}\label{sec:evaluation}

The goal of our evaluation is not merely to measure pruning quality, but to test a stronger claim: whether a neural network can be decomposed into smaller components that remain \emph{semantically faithful} near decision boundaries while also being \emph{structurally distinct}. These two requirements correspond directly to the two axes of our notion of neural decompositionality. Accordingly, the purpose of the evaluation is not simply to assess whether a method produces smaller subnetworks, but to determine whether it realizes a decomposition that satisfies the semantic-structural contract introduced earlier. To answer this question, we evaluate \sysname\ across three model families, NLP Transformers (BERT), CNNs (ResNet), and ViT (DeiT), and organize the evaluation around three research questions:

\begin{itemize}
    \item \textbf{RQ1:} Can LBMask realize genuine decomposition that satisfies the full local semantic-structural contract?
    \item \textbf{RQ2:} Why are both semantic fidelity and structural separation necessary for certifying decompositionality?
    \item \textbf{RQ3:} How does decompositionality vary across architectures and domains?
\end{itemize}

Across all experiments, the main conclusion is consistent: \emph{decompositionality is neither guaranteed by pruning nor uniformly available across models.} Instead, it is a conditional property that emerges only when boundary-local semantic preservation and structural separation can be achieved simultaneously.

\subsection{Evaluation Setup}\label{sec:exp-setup}

The empirical disagreement $\widehat{\mathrm{Dis}}_\kappa$ (\S\ref{sec:local-semantic-fidelity}) serves as a finite-sample approximation to $\mathrm{Dis}_\tau$; all reported semantic metrics are interpreted as empirical estimates of the abstract boundary-local condition.
We evaluate on representative NLP and vision settings. For NLP, we use BERT-small~\cite{DevlinCLT19} fine-tuned on DBPedia-14 (14-way classification) and AG News~\cite{ZhangZL15} (4-way classification). For vision, we use ResNet-34~\cite{He16} and DeiT-small~\cite{pmlr-v139-touvron21a} on CIFAR-10~\cite{krizhevsky2009learning}. DBPedia-14 is particularly useful because its larger number of classes induces a richer collection of pairwise decision boundaries, making it a stronger stress test for class-specific decomposition.

Unless otherwise noted, we evaluate decompositions under an \((\varepsilon, \kappa, \gamma, \eta)\)-parameterization of the abstract contract, where \(\varepsilon\) controls semantic fidelity, \(\gamma\) bounds structural overlap, \(\eta\) enforces nontrivial reduction, and \(\kappa\) defines the empirical boundary approximation. In our experiments, we set \(\varepsilon = 0.1\), \(\gamma = 0.5\), and \(\eta = 0.3\), and approximate the boundary neighborhood using the \(\kappa\)-quantile with \(q = 0.2\), corresponding to the 20\% lowest-margin samples. This focuses evaluation on boundary-adjacent inputs, where the semantic condition is most critical. We use confidence level \(\delta = 0.05\) for statistical guarantees.

In all LBMask experiments, the original model weights are frozen and only the mask logits are optimized. Our default configuration uses uniform initialization (\(\alpha=0\)), structured masking (per-neuron or per-channel), target sparsity \(s=0.5\), and boundary-aware calibration. Unless explicitly varied, these settings are fixed throughout the evaluation.

We compare against three representative baseline families. First, we use \textbf{Wanda}~\cite{Sun0BK24}, a post-training pruning method based on weight magnitude and activation statistics. Second, we use \textbf{MI pruning}~\cite{Fan21}, which ranks units according to mutual information between activations and labels. Both Wanda and MI are evaluated in structured and/or unstructured variants where applicable, allowing us to separate the effect of pruning granularity from the effect of boundary-awareness. Third, for CNNs we include a prior \textbf{decomposition baseline}~\cite{PanR22} to compare against an existing decomposition-oriented method rather than only against pruning baselines. These baselines play distinct roles in the evaluation: some preserve predictions while collapsing structure, whereas others achieve structural reduction while destroying semantics.

For each target class \(k\), we construct a binary component \(m_k\), apply LBMask or a baseline pruning or decomposition method, aggregate the resulting components into \(F_\theta^\downarrow\), and evaluate the final decomposition against the local contract. Concretely, the semantic condition is evaluated through \(\widehat{\mathrm{Dis}}_\kappa\), which serves as the empirical estimator of \(\mathrm{Dis}_\tau\); the structural condition is evaluated through the maximum pairwise overlap \(\max_{i\neq j}\mathrm{Overlap}(S_i,S_j)\); and the nontriviality of the decomposition is evaluated through the minimum component reduction, reported by Prune or an equivalent retained-size ratio. We additionally report Hoeffding correction terms and boundary-restricted confusion-matrix deviation when relevant. A decomposition is counted as satisfying neural decompositionality only if \emph{all} of these conditions hold simultaneously.

To provide a high-level overview before examining each research question in detail, Table~\ref{tab:eval-overview} summarizes which representative method--architecture settings satisfy each axis of the proposed decompositionality contract. The table should be read as a map of failure modes: some settings satisfy the semantic condition but fail structurally, while others achieve structural sparsification but violate the boundary-local semantic condition. A setting constitutes genuine decompositionality only when it satisfies semantic fidelity, structural separation, and nontrivial reduction simultaneously.

\subsection{Overview of Evaluation Results}\label{sec:overview-experimental results}

\begin{table}[t]
    \centering
    \small
    \begin{tabular}{lccc}
        \toprule
        \textbf{Setting} & \textbf{Semantic} & \textbf{Structural} & \textbf{Contract} \\
        \midrule
        BERT / DBPedia + LBMask (\(\alpha=0\)) & \checkmark & \checkmark & \checkmark \\
        BERT / AG News + LBMask & $\times$ & $\checkmark$ & $\times$ \\
        Wanda (unstructured) & \checkmark & $\times$ & $\times$ \\
        MI pruning (structured) & $\times$ & $\times$ & $\times$ \\
        Vision (ResNet/DeiT) + LBMask & $\times$ & \checkmark & $\times$ \\
        Vision + unstructured masking & partial & $\times$ & $\times$ \\
        \bottomrule
    \end{tabular}
    \caption{
    \textbf{Decompositionality outcomes.}
    Only LBMask on BERT/DBPedia satisfies the full contract.
    Other methods fail either semantically or structurally.
    }
    \label{tab:eval-overview}    
\end{table}

Table~\ref{tab:eval-overview} summarizes the key empirical pattern of our evaluation. Only one configuration satisfies the full contract, while all others fail in systematically different ways. These failures are not uniform: some methods preserve boundary-level semantics but collapse structurally, whereas others achieve structural reduction while violating the semantic condition. The remainder of the evaluation analyzes this pattern in detail. RQ1 establishes the realizability of the contract, RQ2 demonstrates the necessity of both axes, and RQ3 characterizes its dependence on architecture.

\subsection{Answer to RQ1: Realizability of Neural Decompositionality}

\begin{table}[t]
    \centering
    \small
    \begin{tabular}{lcccccc}
        \toprule
        \textbf{Dataset}
        & $\widehat{Dis}_\kappa$
        & $\widehat{Dis}_\kappa + h \le \varepsilon$
        & $\max \mathrm{Overlap} \le \gamma$
        & $\min \mathrm{Prune} \ge \eta$
        & $\|C^\downarrow_\kappa - C_\kappa\|_1$
        & \textbf{Contract} \\
        \midrule
        DBPedia-14
        & 0.0521
        & $0.0636 \le 0.1$ \checkmark
        & $0.3629 \le 0.5$ \checkmark
        & $0.5000 \ge 0.3$ \checkmark
        & 0.0700
        & \checkmark \\
        AG News
        & 0.2125
        & $0.2473 > 0.1$ $\times$
        & $0.3950 \le 0.5$ \checkmark
        & $0.5000 \ge 0.3$ \checkmark
        & 0.1697
        & $\times$ \\
        \bottomrule
    \end{tabular}
    \caption{
    \textbf{RQ1: Contract-level evaluation of LBMask under $(\varepsilon,\kappa,\gamma,\eta)$.}
    Each column corresponds to a component of the abstract decompositionality contract.
    A decomposition satisfies the contract only if all conditions hold simultaneously.
    }
    \label{tab:rq1-main}    
\end{table}

Table~\ref{tab:rq1-main} presents the contract-level evaluation of LBMask on two NLP tasks under the $(\varepsilon, \kappa, \gamma, \eta)$-parameterization.
Recall that a decomposition satisfies neural decompositionality only if it simultaneously meets all three conditions:
(i) semantic fidelity, $\mathrm{Dis}_\kappa(F_\theta^\downarrow \mid F_\theta) \le \varepsilon$;
(ii) structural separation, $\max_{i \neq j}\mathrm{Overlap}(S_i,S_j) \le \gamma$; and
(iii) nontrivial reduction, $\min_i \mathrm{Prune}(S_i) \ge \eta$.
In practice, the semantic condition is evaluated via the empirical estimator $\widehat{Dis}_\kappa$, which approximates boundary-local disagreement over the $\kappa$-restricted subset.

\subsubsection{DBPedia-14: realization of the full contract.}
On DBPedia-14 ($K{=}14$), LBMask satisfies all components of the contract.
The aggregated boundary disagreement is $\widehat{Dis}_\kappa = 0.0521$, which lies well below the semantic threshold $\varepsilon = 0.1$.
With Hoeffding correction $h = 0.0115$, we obtain
\[
\widehat{Dis}_\kappa + h = 0.0636 \le \varepsilon,
\]
establishing semantic fidelity with $95\%$ confidence.
At the same time, the structural conditions are satisfied.
The maximum pairwise overlap is $0.3629 < \gamma = 0.5$, indicating that the learned components are not redundant,
and every component achieves the target pruning ratio $0.5000 \ge \eta = 0.3$, confirming that the decomposition is nontrivial.

Beyond these primary metrics, the confusion-matrix deviation
$\|C^\downarrow_\kappa - C_\kappa\|_1 = 0.0700$ is well within the theoretical bound $2\varepsilon = 0.2$ predicted by the Main Theorem.
This consistency provides additional evidence that boundary-local semantic guarantees translate into stable multiclass behavior.
Taken together, these results demonstrate that LBMask produces a decomposition that is simultaneously semantically faithful, structurally distinct, and nontrivially reduced.

\subsubsection{Per-class heterogeneity and boundary complexity.}
Although the aggregated disagreement is low, the per-class results reveal substantial heterogeneity.
The best class (class~10) achieves $\widehat{Dis}_\kappa^{(k)} = 0.0165$, the median class (class~4) achieves $0.0675$, while the worst class (class~8) reaches $0.5072$.
This variation reflects the non-uniform geometry of decision boundaries.
Certain classes are surrounded by semantically similar neighbors, resulting in dense clusters of low-margin inputs where even boundary-aware pruning struggles to preserve predictions.
In contrast, more isolated classes admit simpler local decision regions that are easier to approximate under sparsification.

Importantly, however, neural decompositionality is defined at the level of the aggregated classifier $F_\theta^\downarrow$, not at the level of individual binary components.
The final prediction is obtained via $\arg\max$ aggregation across all $K$ components, which allows errors in individual components to be compensated by others.
As a result, even though some classes exhibit high local disagreement, the overall boundary-level behavior remains stable, and the global semantic condition is satisfied.

\subsubsection{AG News: structural success but semantic failure.}
In contrast, the results on AG News ($K{=}4$) show that decompositionality is not guaranteed.
Here, LBMask achieves comparable structural properties:
the maximum overlap is $0.3950 < \gamma$, and all components satisfy the pruning requirement ($0.5000 \ge \eta$).
However, the semantic condition is violated:
the aggregated disagreement $\widehat{Dis}_\kappa = 0.2125$ exceeds $\varepsilon = 0.1$, and all per-class disagreements fall within $[0.18, 0.31]$.

Unlike DBPedia-14, this degradation is uniform across classes, suggesting a task-level limitation rather than a single difficult boundary.
One possible explanation is that with fewer classes, each decision boundary must encode a broader portion of the semantic space.
At fixed sparsity, this increases the representational burden per component, making it harder to preserve all boundary regions simultaneously.
However, we emphasize that AG News and DBPedia-14 differ in several confounding factors beyond class count, including domain, vocabulary, and semantic granularity.
A controlled study would be required to isolate the precise cause of this behavior.

It is also worth noting that the confusion-matrix deviation $\|C^\downarrow_\kappa - C_\kappa\|_1 = 0.1697$ is reported for completeness, but the theoretical bound $\le 2\varepsilon$ does not apply in this case because the semantic condition is violated.
This further highlights the central role of boundary-local semantic fidelity in ensuring global stability.

\subsubsection{Conclusion.}
Taken together, these results provide a clear answer to RQ1.
\emph{Neural decompositionality is a realizable but conditional property.}
The success on DBPedia-14 demonstrates that a neural network can be decomposed into multiple structurally distinct components that collectively preserve boundary-local semantics.
At the same time, the failure on AG News shows that this property is not inherent to the decomposition algorithm alone, but depends on whether the underlying model admits a boundary-preserving factorization at the given sparsity level.

\subsection{Answer to RQ2: Why Both Metrics Are Necessary}\label{sec:rq2}

To answer RQ2, we examine whether semantic fidelity ($\widehat{Dis}_\kappa$) and structural separation (Overlap) can be satisfied independently, and whether either metric alone is sufficient to certify a valid decomposition. Table~\ref{tab:rq2-main} reveals two distinct and complementary failure modes, showing that the two metrics capture fundamentally different properties.

\begin{table}[t]
    \centering
    \small
    \begin{tabular}{lccc}
        \toprule
        \textbf{Method} & $\widehat{Dis}_\kappa$ & \textbf{Overlap} & \textbf{Failure mode} \\
        \midrule
        LBM-U ($\alpha{=}0$) & 0.0521 & 0.3629 & -- (satisfies both) \\
        Wanda (unstr.)       & 0.0097 & 0.9940 & structural collapse \\
        Wanda (str.)         & 0.3795 & 0.9885 & semantic + structural failure \\
        MI (str.)            & 0.3491 & 0.8668 & semantic failure \\
        LBM-M ($\alpha{=}1$) & 0.1272 & 0.9223 & high overlap \\
        \bottomrule
    \end{tabular}
    \caption{
    \textbf{RQ2: Metric necessity via failure modes (DBPedia-14, $s{=}0.5$, $\kappa$-quantile $q{=}0.2$).}
    Each method fails either semantic fidelity or structural separation.
    }
    \label{tab:rq2-main}    
\end{table}

\subsubsection{Semantic preservation does not imply decomposition.}
Wanda-unstructured achieves the lowest disagreement among all methods ($\widehat{Dis}_\kappa = 0.0097$), indicating near-perfect boundary-level semantic fidelity.
However, its overlap is $0.9940$, meaning that all class-wise components share nearly identical supports.
This corresponds to a degenerate solution in which the model preserves predictions by reusing the same subnetwork for every class, producing no meaningful structural separation.
Thus, semantic fidelity alone is insufficient. It admits trivial solutions that collapse the decomposition.

\subsubsection{Structural sparsification does not preserve semantics.}
Structured baselines exhibit the opposite failure.
Wanda-structured and MI-structured both achieve nontrivial pruning but suffer from large disagreement ($0.3795$ and $0.3491$, respectively), indicating severe semantic distortion near decision boundaries.
These methods select globally important neurons shared across classes, rather than features that distinguish classes locally.
As a result, structural reduction alone does not preserve the classifier's decision-boundary behavior.

\subsubsection{Boundary-aware optimization requires structural flexibility.}
LBMask with magnitude initialization ($\alpha{=}1$) partially reduces disagreement ($0.1272$), but still exhibits high overlap ($0.9223$).
The magnitude prior anchors all masks to shared high-norm neurons, preventing structural separation.
Only LBMask with uniform initialization ($\alpha{=}0$) simultaneously achieves low disagreement ($0.0521$) and low overlap ($0.3629$), satisfying both conditions.

\subsubsection{Conclusion.}
These results show that semantic fidelity and structural separation are both necessary to characterize neural decompositionality.
Semantic fidelity eliminates structurally collapsed solutions, while structural separation eliminates semantically invalid decompositions.
\emph{Neither condition alone is sufficient: only their conjunction yields a meaningful decomposition that preserves boundary-level behavior while producing distinct components.}

\subsection{RQ3: Architecture-Level Decompositionality}\label{sec:rq3}

RQ1 established that neural decompositionality can be realized in practice, and RQ2 showed that both semantic fidelity and structural separation are necessary.
RQ3 asks a complementary question: \emph{where} does the $(\varepsilon,\kappa,\gamma,\eta)$-contract hold?
In particular, does decompositionality emerge uniformly across architectures, or do certain model families exhibit intrinsic barriers?
Table~\ref{tab:rq3-summary} summarizes the central architectural pattern.
Only BERT satisfies both semantic and structural conditions simultaneously, while both CNN and ViT fail the semantic condition despite achieving structural sparsification.

\begin{table}[t]
    \centering
    \small
    \begin{tabular}{lccc}
        \toprule
        \textbf{Architecture} & \textbf{Semantic} & \textbf{Structural} & \textbf{Contract} \\
        \midrule
        BERT (NLP Transformer) & \checkmark & \checkmark & \checkmark \\
        CNN (ResNet)           & $\times$      & \checkmark & $\times$ \\
        ViT (DeiT)             & $\times$      & \checkmark & $\times$ \\
        \bottomrule
    \end{tabular}
    \caption{
    \textbf{RQ3: Decompositionality across architectures.}
    Only BERT satisfies the full contract.
    Vision models exhibit a systematic trade-off between semantic fidelity and structural separation.
    }
    \label{tab:rq3-summary}
\end{table}

\subsubsection{BERT: Contract satisfaction under structured sparsification.}
For BERT, the $(\varepsilon,\kappa,\gamma,\eta)$-contract is satisfied at moderate sparsity.
Boundary disagreement remains below the semantic threshold while overlap decreases as sparsity increases, yielding structurally distinct components without sacrificing semantic fidelity.
Uniform initialization ($\alpha{=}0$) is critical in enabling class-specific mask discovery.
Overall, BERT admits a boundary-preserving and structurally separated decomposition.

\subsubsection{CNN: Persistent semantic failure.}
ResNet fails to satisfy the semantic condition across all configurations.
While increasing sparsity improves structural separation, boundary disagreement remains consistently high.
Structured masking preserves structure but destroys semantics, whereas unstructured masking can preserve semantics only at the cost of structural collapse.
No configuration satisfies both contract axes simultaneously.

\subsubsection{ViT: Partial modularity without contract satisfaction.}
DeiT exhibits intermediate behavior between BERT and CNN.
Structured masking yields lower disagreement than CNN but still fails the semantic threshold.
Unstructured masking can recover semantic fidelity, but again only with near-total overlap.
Thus, ViT shows limited modularity but still fails to satisfy the full contract.

\subsubsection{A consistent trade-off in vision models.}
Across both CNN and ViT, all configurations follow the same pattern:
improving semantic fidelity increases overlap, while improving structural separation increases semantic error.
This trade-off persists across sparsity levels, masking granularities, and initialization schemes.

\subsubsection{Interpretation.}
These results point to a fundamental architectural distinction.
BERT encodes class-discriminative information in sparse, class-specific components, enabling decomposition.
In contrast, vision models rely on distributed representations, where class information is shared across many features.
This makes it difficult to isolate class-specific components without disrupting boundary semantics.

\subsubsection{Conclusion.}
Neural decompositionality is therefore architecture-dependent.
It emerges in NLP Transformers but fails in CNNs and ViTs under all tested configurations.
This indicates that decompositionality is governed by representation structure rather than the decomposition method itself.

\subsection{Discussion}\label{sec:discussion}

\subsubsection{Decompositionality is realizable but conditional.}
Our evaluation shows that neural decompositionality is not a byproduct of pruning or compression, but a property that must be explicitly realized and verified.
RQ1 demonstrates that the $(\varepsilon,\kappa,\gamma,\eta)$-contract can be satisfied in practice: LBMask produces decompositions that are both semantically faithful and structurally distinct on BERT.
However, the same method fails on AG News, indicating that decompositionality depends on the interaction between model, task, and representational capacity rather than on the decomposition algorithm alone.

\subsubsection{Both contract axes are necessary.}
RQ2 establishes that semantic fidelity and structural separation capture orthogonal failure modes.
Methods that optimize only semantic fidelity collapse structurally, producing degenerate decompositions with nearly identical components.
Conversely, methods that optimize only structural sparsity destroy boundary-level semantics.
Thus, neither axis alone is sufficient: meaningful decomposition requires their conjunction.
This validates the need for a two-dimensional contract rather than a single aggregate metric.

\subsubsection{Architecture determines decompositionality.}
RQ3 reveals a sharp architectural divide.
BERT satisfies the full contract at moderate sparsity, indicating that its representations admit a boundary-preserving factorization.
In contrast, both CNNs and ViT fail under all tested configurations.
Across these models, a consistent trade-off emerges: preserving boundary semantics requires retaining shared features across classes, while enforcing structural separation disrupts those same features.
This suggests that decompositionality is governed by how class-discriminative information is distributed---sparse and separable in NLP models, but dense and entangled in vision models.

\subsubsection{Implications for neural modularity.}
These findings suggest that neural decompositionality should be understood as a property of representation rather than of decomposition method.
Architectures that encode class-specific information in localized substructures naturally support modular decomposition, while architectures that rely on distributed representations do not.
This perspective aligns neural decomposition with classical notions of software modularity, where meaningful modules correspond to separable units of functionality.

\subsubsection{Limitations and future directions.}
Our study focuses on a fixed contract threshold and a limited set of models and tasks.
While the qualitative patterns are robust, further work is needed to explore larger-scale models, alternative architectures, and different operating regimes.
In particular, relaxing the contract or adopting multi-objective formulations may reveal intermediate regimes of partial decompositionality.
Additionally, evaluating the usefulness of decomposed components in downstream tasks such as verification or modular reuse remains an important direction for future work.

\section{Related Works}\label{sec:related-works}

A large body of works has investigated on relevant topics, including decision boundary anlaysis, testing approaches, network decomposition via pruning, and formal verifier of decision boundaries, etc.

\paragraph{\textbf{Decision-boundary analysis and boundary-oriented input generation.}}

A large body of work studies input generation techniques that probe or manipulate the decision boundary of neural networks.
Adversarial example generation methods such as FGSM~\cite{GoodfellowSS15}, PGD~\cite{MadryMSTV18}, DeepFool~\cite{Moosavi-Dezfooli16}, and C\&W~\cite{Carlini017} aim to find perturbations that cross the decision boundary while remaining close to the original input.
These methods demonstrate that decision-boundary structure is central to the behavior of neural classifiers, but their primary goal is to expose vulnerability or generate adversarial counterexamples rather than to characterize modularity or decomposition.

Related efforts have explored boundary-adjacent input generation more directly.
For example, DeepDIG~\cite{KarimiD22} generates inputs close to decision boundaries by constructing adversarial examples and refining them via binary search.
Other works study decision-boundary geometry, perturbation sensitivity, or visualization of class transitions~\cite{Oyallon17,SomepalliFBCDBG22,KarimiTD19,FawziMFS17,ZhaoNG24}.
These methods are highly relevant to our work because they reinforce the importance of boundary regions.
However, they do not define decompositionality as a semantic property, nor do they use boundary behavior as a formal contract for validating network decomposition.

Our work differs in two ways.
First, we elevate the decision boundary from an analysis target to a \emph{semantic reference structure}: decomposition is considered valid only when it preserves classifier behavior near this boundary.
Second, we use boundary-oriented inputs not to construct attacks, but to operationalize \emph{boundary-aware semantic fidelity} and its local approximation.
Thus, unlike prior boundary-probing methods, our framework turns boundary sensitivity into a formal criterion for decomposition.

\paragraph{\textbf{Coverage-guided testing and input generation.}}

Another line of research focuses on input generation for testing and coverage maximization~\cite{OdenaOAG19,8835342}.
These approaches attempt to activate neurons, layers, or internal behaviors that would otherwise remain untested.
Although such methods are useful for systematic exploration of neural-network behavior, they largely define adequacy in syntactic terms, such as neuron coverage or activation diversity.
Our work is complementary but distinct.
Rather than maximizing structural coverage, we focus on semantically critical regions near the decision boundary.
In this sense, our notion of local semantic fidelity is closer to boundary-sensitive behavioral preservation than to conventional testing adequacy criteria.

\paragraph{\textbf{Modular architectures and mixture-of-experts.}}
Neural modularity has also been explored through architectural design,
including mixture-of-experts models~\cite{JacobsJNH91} and routing-based networks~\cite{10.5555/3294996.3295142}, where
different components are trained to specialize on subsets of the input space.
These approaches introduce modularity during training via explicit architectural
constraints.

In contrast, our work studies \emph{post hoc} decomposition of a trained model
without modifying its architecture or retraining.
Moreover, while modular architectures encourage specialization,
they do not provide a formal criterion for when the resulting components
constitute a semantically valid decomposition of the original model.
Our formulation addresses this gap by defining decompositionality as a
boundary-aware semantic--structural contract.



    
\paragraph{\textbf{Network pruning and decomposition.}}

Our work is closely related to pruning, model compression, and modularization, but differs fundamentally in objective and formalization.
Most pruning methods, including both unstructured and structured pruning techniques~\cite{Cheng24}, aim to reduce model size or computational cost while preserving aggregate predictive accuracy.
Methods such as Wanda~\cite{Sun0BK24} compute importance scores based on weights and activations, while in-training pruning approaches such as HYDRA~\cite{SehwagWMJ20} and related robustness-aware pruning frameworks optimize sparse masks jointly with model parameters.
These methods are highly effective for compression, but they do not define when a pruned or partitioned model constitutes a valid \emph{decomposition} in a semantic sense.

The distinction is central to our formulation.
In our framework, a decomposition must satisfy two conditions: (1) \emph{semantic fidelity}, expressed as low disagreement near the decision boundary, and (2) \emph{structural divergence}, expressed through mask disjointness and non-trivial pruning.
This differs from standard pruning, where two subnetworks may retain high accuracy while still sharing nearly identical internal structure, and thus fail to provide meaningful modular separation.

Prior work has also explicitly studied neural network decomposition.
Pan and Rajan~\cite{PanR20} decompose monolithic DNNs into smaller concern-based modules, and later extend this line of work to CNNs~\cite{PanR22}; Imtiaz et al.~\cite{ImtiazBSPCR23} further adapt similar ideas to recurrent models.
These works are among the closest to ours in spirit, as they seek to recover modularity in neural networks.
However, their decompositions are primarily guided by concern identification and structural factorization, without a formal semantic contract based on classifier behavior near class-transition regions.
Our work differs by defining decompositionality itself as a joint semantic--structural property: preserving boundary-level behavior while ensuring that the resulting components are non-trivially distinct.

\paragraph{\textbf{Neural network verification and robustness analysis.}}

A substantial literature studies formal verification of neural networks, especially robustness verification.
Existing approaches are commonly divided into constraint-based methods and abstraction-based methods~\cite{HuangKRSSTWY20,albarghouthi2021introduction}, with tools such as Marabou~\cite{KatzHIJLLSTWZDK19,WuIZTDKRAJBHLWZKKB24}, $\alpha,\beta$-CROWN~\cite{zhang2018efficient,xu2020automatic,salman2019convex,xu2021fast,wang2021beta}, and ERAN~\cite{SinghGMPV18,SinghGPV19}.
These systems typically reason about local robustness or safety properties of a fixed model.
Their focus is not on whether a model admits a modular decomposition that preserves semantics.

Nevertheless, our work is closely connected to this line of research in two important ways.
First, our emphasis on the decision boundary is directly inspired by robustness verification, where small-margin or adversarially vulnerable inputs mark semantically unstable regions.
Second, our transition from global decompositionality to local decompositionality parallels the distinction between global and local robustness, and our dataset-based abstraction is conceptually informed by abstract interpretation.
Unlike standard verification frameworks, however, we do not aim to certify all perturbations in an input region; instead, we use local robustness reasoning as a \emph{semantic probe} for whether decomposition preserves class-transition behavior.

Several recent works aim to improve the scalability of verification by reusing proofs or latent abstractions across related models~\cite{Marc22,Shubham22}, or by extracting subnetworks connected to latent representations for separate verification~\cite{HanspalL23}.
These efforts are highly relevant because they show that structural reduction can improve verification scalability.
However, they do not ask when such reductions are semantically justified as decompositions of the original model.
Our work addresses precisely this gap by providing a formal definition of decompositionality and an empirical framework for evaluating it without retraining the original model.

\paragraph{\textbf{Abstraction and local reasoning.}}

The abstraction used in our local formulation is inspired by two neighboring traditions.
From neural network verification, we inherit the global-versus-local distinction: a semantic property stated over the entire input space is often intractable, while a local approximation around critical regions is analyzable in practice.
From abstract interpretation, we inherit the idea that an abstract domain can provide a tractable surrogate for an intractable concrete semantics.
Our margin-based boundary abstraction follows this spirit by replacing the true boundary neighborhood with a finite subset of low-margin inputs.

At the same time, our abstraction differs from standard sound abstractions in program analysis.
We do not claim that the margin-induced boundary subset is a sound over-approximation of the full decision boundary.
Instead, it is a practical and semantically motivated approximation designed to preserve the intent of decompositionality while enabling empirical realization.
To our knowledge, prior work has not formulated neural network decomposition through such a boundary-centric semantic abstraction.

\section{Conclusion and Future Work} \label{sec:conclusion}

\paragraph{\textbf{Conclusion.}}
We introduced \emph{neural decompositionality}, a formally defined
property that characterizes when a trained neural network admits a
semantically meaningful decomposition. Unlike prior work that relies on
aggregate accuracy, our formulation grounds decompositionality in the
classifier's decision boundary, where semantic transitions occur. We
formalize decompositionality as a joint semantic-structural contract
consisting of (i)~boundary-aware semantic fidelity and (ii)~structural
divergence, ensuring both behavioral preservation and non-trivial
modular separation.

To operationalize this definition, we developed \sysname, a
boundary-aware decomposition framework that combines boundary-focused
input generation with LBMask. The framework concentrates on optimization
in decision-boundary regions while preserving the original model
parameters, enabling the discovery of class-specific structural
components driven by boundary-relevant gradients.

Our empirical study shows that decompositionality is achievable, but
highly non-trivial. The proposed contract exposes distinct failure modes
that are not captured by standard evaluation. For example, unstructured methods
preserve boundary-level semantics but fail to produce meaningful
structure, whereas class-agnostic structured methods collapse both
semantic fidelity and structural divergence. Furthermore,
decompositionality exhibits a clear architectural and task bias. NLP Transformers
are the most amenable. However, Vision Transformers partially satisfy semantic
constraints but fail to meet the full contract, and CNNs largely resist
decomposition due to the diffuse nature of class-discriminative
computation. These results establish decompositionality as a principled
and empirically testable notion, providing a foundation for modular
reasoning over neural networks.

\paragraph{\textbf{Future work.}}
Our concept is new, and several directions remain open.
First, scaling to larger Transformer architectures and more diverse
tasks is necessary to assess the generality of the observed NLP--vision
divide. Second, adaptive per-class sparsity allocation may improve
contract satisfaction in settings where uniform sparsity is insufficient.
Third, establishing a connection between decompositionality and
verification is a key next step: if components can be verified
independently and the contract guarantees boundary-level semantic
preservation under composition, then verification cost should scale with
component size rather than that of the full model.

More fundamentally, our results suggest that decompositionality is not
only a property of the decomposition method but also of the model,
namely how class-discriminative computation is organized during training.
This motivates \emph{decomposition-aware training}, where objectives or
architectural constraints encourage modular representations amenable to
post-hoc decomposition. Possible directions include penalizing
cross-class neuron co-activation, enforcing modular bottlenecks, or using
multi-task pretraining to induce role separation. Such approaches would
elevate decompositionality from a post-hoc diagnostic to a design
principle, potentially extending it to vision models where current
methods fail.

Finally, inference-time architectural interventions may complement
training-time strategies by introducing structural separability into
otherwise entangled architectures. Such interventions can be viewed as
instantiations of the same principle, including mechanisms such as
class-conditional attention routing and sparse mixture-of-experts.

\bibliographystyle{ACM-Reference-Format}
\bibliography{refs}


\begin{thebibliography}{54}


\ifx \showCODEN    \undefined \def \showCODEN     #1{\unskip}     \fi
\ifx \showDOI      \undefined \def \showDOI       #1{#1}\fi
\ifx \showISBNx    \undefined \def \showISBNx     #1{\unskip}     \fi
\ifx \showISBNxiii \undefined \def \showISBNxiii  #1{\unskip}     \fi
\ifx \showISSN     \undefined \def \showISSN      #1{\unskip}     \fi
\ifx \showLCCN     \undefined \def \showLCCN      #1{\unskip}     \fi
\ifx \shownote     \undefined \def \shownote      #1{#1}          \fi
\ifx \showarticletitle \undefined \def \showarticletitle #1{#1}   \fi
\ifx \showURL      \undefined \def \showURL       {\relax}        \fi
\providecommand\bibfield[2]{#2}
\providecommand\bibinfo[2]{#2}
\providecommand\natexlab[1]{#1}
\providecommand\showeprint[2][]{arXiv:#2}

\bibitem[Albarghouthi(2021)]%
        {albarghouthi2021introduction}
\bibfield{author}{\bibinfo{person}{Aws Albarghouthi}.}
  \bibinfo{year}{2021}\natexlab{}.
\newblock \showarticletitle{Introduction to Neural Network Verification}.
\newblock \bibinfo{journal}{\emph{Foundations and Trends in Programming
  Languages}} \bibinfo{volume}{7}, \bibinfo{number}{1-2} (\bibinfo{date}{12}
  \bibinfo{year}{2021}), \bibinfo{pages}{1--157}.
\newblock
\showISSN{2325-1107}
\urldef\tempurl%
\url{https://doi.org/10.1561/2500000051}
\showDOI{\tempurl}
\showeprint{https://www.emerald.com/ftpgl/article-pdf/7/1-2/1/11044581/2500000051en.pdf}


\bibitem[Bengio et~al\mbox{.}(2013)]%
        {Bengio13}
\bibfield{author}{\bibinfo{person}{Yoshua Bengio}, \bibinfo{person}{Nicholas
  L{\'{e}}onard}, {and} \bibinfo{person}{Aaron~C. Courville}.}
  \bibinfo{year}{2013}\natexlab{}.
\newblock \showarticletitle{Estimating or Propagating Gradients Through
  Stochastic Neurons for Conditional Computation}.
\newblock \bibinfo{journal}{\emph{CoRR}}  \bibinfo{volume}{abs/1308.3432}
  (\bibinfo{year}{2013}).
\newblock
\showeprint[arXiv]{1308.3432}
\urldef\tempurl%
\url{http://arxiv.org/abs/1308.3432}
\showURL{%
\tempurl}


\bibitem[Carlini and Wagner(2017)]%
        {Carlini017}
\bibfield{author}{\bibinfo{person}{Nicholas Carlini} {and}
  \bibinfo{person}{David~A. Wagner}.} \bibinfo{year}{2017}\natexlab{}.
\newblock \showarticletitle{Towards Evaluating the Robustness of Neural
  Networks}. In \bibinfo{booktitle}{\emph{2017 {IEEE} Symposium on Security and
  Privacy, {SP} 2017, San Jose, CA, USA, May 22-26, 2017}}.
  \bibinfo{publisher}{{IEEE} Computer Society}, \bibinfo{pages}{39--57}.
\newblock
\urldef\tempurl%
\url{https://doi.org/10.1109/SP.2017.49}
\showDOI{\tempurl}


\bibitem[Cheng et~al\mbox{.}(2024)]%
        {Cheng24}
\bibfield{author}{\bibinfo{person}{Hongrong Cheng}, \bibinfo{person}{Miao
  Zhang}, {and} \bibinfo{person}{Javen~Qinfeng Shi}.}
  \bibinfo{year}{2024}\natexlab{}.
\newblock \showarticletitle{A Survey on Deep Neural Network Pruning: Taxonomy,
  Comparison, Analysis, and Recommendations}.
\newblock \bibinfo{journal}{\emph{IEEE Transactions on Pattern Analysis and
  Machine Intelligence}} \bibinfo{volume}{46}, \bibinfo{number}{12}
  (\bibinfo{year}{2024}), \bibinfo{pages}{10558--10578}.
\newblock
\urldef\tempurl%
\url{https://doi.org/10.1109/TPAMI.2024.3447085}
\showDOI{\tempurl}


\bibitem[Devlin et~al\mbox{.}(2019)]%
        {DevlinCLT19}
\bibfield{author}{\bibinfo{person}{Jacob Devlin}, \bibinfo{person}{Ming{-}Wei
  Chang}, \bibinfo{person}{Kenton Lee}, {and} \bibinfo{person}{Kristina
  Toutanova}.} \bibinfo{year}{2019}\natexlab{}.
\newblock \showarticletitle{{BERT:} Pre-training of Deep Bidirectional
  Transformers for Language Understanding}. In
  \bibinfo{booktitle}{\emph{Proceedings of the 2019 Conference of the North
  American Chapter of the Association for Computational Linguistics: Human
  Language Technologies, {NAACL-HLT} 2019, Minneapolis, MN, USA, June 2-7,
  2019, Volume 1 (Long and Short Papers)}},
  \bibfield{editor}{\bibinfo{person}{Jill Burstein}, \bibinfo{person}{Christy
  Doran}, {and} \bibinfo{person}{Thamar Solorio}} (Eds.).
  \bibinfo{publisher}{Association for Computational Linguistics},
  \bibinfo{pages}{4171--4186}.
\newblock


\bibitem[Fan et~al\mbox{.}(2021)]%
        {Fan21}
\bibfield{author}{\bibinfo{person}{Chun Fan}, \bibinfo{person}{Jiwei Li},
  \bibinfo{person}{Tianwei Zhang}, \bibinfo{person}{Xiang Ao},
  \bibinfo{person}{Fei Wu}, \bibinfo{person}{Yuxian Meng}, {and}
  \bibinfo{person}{Xiaofei Sun}.} \bibinfo{year}{2021}\natexlab{}.
\newblock \showarticletitle{Layer-wise Model Pruning based on Mutual
  Information}. In \bibinfo{booktitle}{\emph{Proceedings of the 2021 Conference
  on Empirical Methods in Natural Language Processing, {EMNLP} 2021}}.
  \bibinfo{publisher}{Association for Computational Linguistics},
  \bibinfo{pages}{3079--3090}.
\newblock


\bibitem[Fawzi et~al\mbox{.}(2017)]%
        {FawziMFS17}
\bibfield{author}{\bibinfo{person}{Alhussein Fawzi},
  \bibinfo{person}{Seyed{-}Mohsen Moosavi{-}Dezfooli}, \bibinfo{person}{Pascal
  Frossard}, {and} \bibinfo{person}{Stefano Soatto}.}
  \bibinfo{year}{2017}\natexlab{}.
\newblock \showarticletitle{Classification regions of deep neural networks}.
\newblock \bibinfo{journal}{\emph{CoRR}}  \bibinfo{volume}{abs/1705.09552}
  (\bibinfo{year}{2017}).
\newblock
\showeprint[arXiv]{1705.09552}
\urldef\tempurl%
\url{http://arxiv.org/abs/1705.09552}
\showURL{%
\tempurl}


\bibitem[Fischer et~al\mbox{.}(2022)]%
        {Marc22}
\bibfield{author}{\bibinfo{person}{Marc Fischer}, \bibinfo{person}{Christian
  Sprecher}, \bibinfo{person}{Dimitar~Iliev Dimitrov},
  \bibinfo{person}{Gagandeep Singh}, {and} \bibinfo{person}{Martin Vechev}.}
  \bibinfo{year}{2022}\natexlab{}.
\newblock \showarticletitle{Shared Certificates for Neural Network
  Verification}. In \bibinfo{booktitle}{\emph{Computer Aided Verification}},
  \bibfield{editor}{\bibinfo{person}{Sharon Shoham} {and}
  \bibinfo{person}{Yakir Vizel}} (Eds.). \bibinfo{publisher}{Springer
  International Publishing}, \bibinfo{address}{Cham},
  \bibinfo{pages}{127--148}.
\newblock
\showISBNx{978-3-031-13185-1}


\bibitem[Frankle and Carbin(2019)]%
        {frankle2019}
\bibfield{author}{\bibinfo{person}{Jonathan Frankle} {and}
  \bibinfo{person}{Michael Carbin}.} \bibinfo{year}{2019}\natexlab{}.
\newblock \showarticletitle{The Lottery Ticket Hypothesis: Finding Sparse,
  Trainable Neural Networks}. In \bibinfo{booktitle}{\emph{7th International
  Conference on Learning Representations, {ICLR} 2019, New Orleans, LA, USA,
  May 6-9, 2019}}. \bibinfo{publisher}{OpenReview.net}.
\newblock
\urldef\tempurl%
\url{https://openreview.net/forum?id=rJl-b3RcF7}
\showURL{%
\tempurl}


\bibitem[Goodfellow et~al\mbox{.}(2016)]%
        {Goodfellow-et-al-2016}
\bibfield{author}{\bibinfo{person}{Ian Goodfellow}, \bibinfo{person}{Yoshua
  Bengio}, {and} \bibinfo{person}{Aaron Courville}.}
  \bibinfo{year}{2016}\natexlab{}.
\newblock \bibinfo{booktitle}{\emph{Deep Learning}}.
\newblock \bibinfo{publisher}{MIT Press}.
\newblock
\newblock
\shownote{\url{http://www.deeplearningbook.org}}.


\bibitem[Goodfellow et~al\mbox{.}(2015)]%
        {GoodfellowSS15}
\bibfield{author}{\bibinfo{person}{Ian~J. Goodfellow},
  \bibinfo{person}{Jonathon Shlens}, {and} \bibinfo{person}{Christian
  Szegedy}.} \bibinfo{year}{2015}\natexlab{}.
\newblock \showarticletitle{Explaining and Harnessing Adversarial Examples}. In
  \bibinfo{booktitle}{\emph{3rd International Conference on Learning
  Representations, {ICLR} 2015, San Diego, CA, USA, May 7-9, 2015, Conference
  Track Proceedings}}, \bibfield{editor}{\bibinfo{person}{Yoshua Bengio} {and}
  \bibinfo{person}{Yann LeCun}} (Eds.).
\newblock
\urldef\tempurl%
\url{http://arxiv.org/abs/1412.6572}
\showURL{%
\tempurl}


\bibitem[Grigorescu et~al\mbox{.}(2020)]%
        {GrigorescuTCM20}
\bibfield{author}{\bibinfo{person}{Sorin Grigorescu}, \bibinfo{person}{Bogdan
  Trasnea}, \bibinfo{person}{Tiberiu Cocias}, {and} \bibinfo{person}{Gigel
  Macesanu}.} \bibinfo{year}{2020}\natexlab{}.
\newblock \showarticletitle{A survey of deep learning techniques for autonomous
  driving}.
\newblock \bibinfo{journal}{\emph{Journal of Field Robotics}}
  \bibinfo{volume}{37}, \bibinfo{number}{3} (\bibinfo{year}{2020}),
  \bibinfo{pages}{362--386}.
\newblock
\urldef\tempurl%
\url{https://doi.org/10.1002/rob.21918}
\showDOI{\tempurl}
\showeprint{https://onlinelibrary.wiley.com/doi/pdf/10.1002/rob.21918}


\bibitem[Han et~al\mbox{.}(2016)]%
        {Han16}
\bibfield{author}{\bibinfo{person}{Song Han}, \bibinfo{person}{Huizi Mao},
  {and} \bibinfo{person}{William~J. Dally}.} \bibinfo{year}{2016}\natexlab{}.
\newblock \showarticletitle{Deep Compression: Compressing Deep Neural Network
  with Pruning, Trained Quantization and Huffman Coding}. In
  \bibinfo{booktitle}{\emph{4th International Conference on Learning
  Representations, {ICLR} 2016, San Juan, Puerto Rico, May 2-4, 2016,
  Conference Track Proceedings}}, \bibfield{editor}{\bibinfo{person}{Yoshua
  Bengio} {and} \bibinfo{person}{Yann LeCun}} (Eds.).
\newblock
\urldef\tempurl%
\url{http://arxiv.org/abs/1510.00149}
\showURL{%
\tempurl}


\bibitem[Hanspal and Lomuscio(2023)]%
        {HanspalL23}
\bibfield{author}{\bibinfo{person}{Harleen Hanspal} {and}
  \bibinfo{person}{Alessio Lomuscio}.} \bibinfo{year}{2023}\natexlab{}.
\newblock \showarticletitle{Efficient Verification of Neural Networks Against
  LVM-Based Specifications}. In \bibinfo{booktitle}{\emph{{IEEE/CVF} Conference
  on Computer Vision and Pattern Recognition, {CVPR} 2023, Vancouver, BC,
  Canada, June 17-24, 2023}}. \bibinfo{publisher}{{IEEE}},
  \bibinfo{pages}{3894--3903}.
\newblock
\urldef\tempurl%
\url{https://doi.org/10.1109/CVPR52729.2023.00379}
\showDOI{\tempurl}


\bibitem[He et~al\mbox{.}(2016)]%
        {He16}
\bibfield{author}{\bibinfo{person}{Kaiming He}, \bibinfo{person}{Xiangyu
  Zhang}, \bibinfo{person}{Shaoqing Ren}, {and} \bibinfo{person}{Jian Sun}.}
  \bibinfo{year}{2016}\natexlab{}.
\newblock \showarticletitle{Deep Residual Learning for Image Recognition}. In
  \bibinfo{booktitle}{\emph{2016 {IEEE} Conference on Computer Vision and
  Pattern Recognition, {CVPR} 2016}}. \bibinfo{publisher}{{IEEE} Computer
  Society}, \bibinfo{pages}{770--778}.
\newblock


\bibitem[He and Xiao(2024)]%
        {He24}
\bibfield{author}{\bibinfo{person}{Yang He} {and} \bibinfo{person}{Lingao
  Xiao}.} \bibinfo{year}{2024}\natexlab{}.
\newblock \showarticletitle{Structured Pruning for Deep Convolutional Neural
  Networks: {A} Survey}.
\newblock \bibinfo{journal}{\emph{{IEEE} Trans. Pattern Anal. Mach. Intell.}}
  \bibinfo{volume}{46}, \bibinfo{number}{5} (\bibinfo{year}{2024}),
  \bibinfo{pages}{2900--2919}.
\newblock


\bibitem[Huang et~al\mbox{.}(2020)]%
        {HuangKRSSTWY20}
\bibfield{author}{\bibinfo{person}{Xiaowei Huang}, \bibinfo{person}{Daniel
  Kroening}, \bibinfo{person}{Wenjie Ruan}, \bibinfo{person}{James Sharp},
  \bibinfo{person}{Youcheng Sun}, \bibinfo{person}{Emese Thamo},
  \bibinfo{person}{Min Wu}, {and} \bibinfo{person}{Xinping Yi}.}
  \bibinfo{year}{2020}\natexlab{}.
\newblock \showarticletitle{A survey of safety and trustworthiness of deep
  neural networks: Verification, testing, adversarial attack and defence, and
  interpretability}.
\newblock \bibinfo{journal}{\emph{Comput. Sci. Rev.}}  \bibinfo{volume}{37}
  (\bibinfo{year}{2020}), \bibinfo{pages}{100270}.
\newblock
\urldef\tempurl%
\url{https://doi.org/10.1016/J.COSREV.2020.100270}
\showDOI{\tempurl}


\bibitem[Imtiaz et~al\mbox{.}(2023)]%
        {ImtiazBSPCR23}
\bibfield{author}{\bibinfo{person}{Sayem~Mohammad Imtiaz},
  \bibinfo{person}{Fraol Batole}, \bibinfo{person}{Astha Singh},
  \bibinfo{person}{Rangeet Pan}, \bibinfo{person}{Breno~Dantas Cruz}, {and}
  \bibinfo{person}{Hridesh Rajan}.} \bibinfo{year}{2023}\natexlab{}.
\newblock \showarticletitle{Decomposing a Recurrent Neural Network into Modules
  for Enabling Reusability and Replacement}. In \bibinfo{booktitle}{\emph{45th
  {IEEE/ACM} International Conference on Software Engineering, {ICSE} 2023,
  Melbourne, Australia, May 14-20, 2023}}. \bibinfo{publisher}{{IEEE}},
  \bibinfo{pages}{1020--1032}.
\newblock


\bibitem[Jacobs et~al\mbox{.}(1991)]%
        {JacobsJNH91}
\bibfield{author}{\bibinfo{person}{Robert~A. Jacobs},
  \bibinfo{person}{Michael~I. Jordan}, \bibinfo{person}{Steven~J. Nowlan},
  {and} \bibinfo{person}{Geoffrey~E. Hinton}.} \bibinfo{year}{1991}\natexlab{}.
\newblock \showarticletitle{Adaptive Mixtures of Local Experts}.
\newblock \bibinfo{journal}{\emph{Neural Computation}} \bibinfo{volume}{3},
  \bibinfo{number}{1} (\bibinfo{date}{03} \bibinfo{year}{1991}),
  \bibinfo{pages}{79--87}.
\newblock
\showISSN{0899-7667}
\urldef\tempurl%
\url{https://doi.org/10.1162/neco.1991.3.1.79}
\showDOI{\tempurl}
\showeprint{https://direct.mit.edu/neco/article-pdf/3/1/79/812104/neco.1991.3.1.79.pdf}


\bibitem[Karimi and Derr(2022)]%
        {KarimiD22}
\bibfield{author}{\bibinfo{person}{Hamid Karimi} {and} \bibinfo{person}{Tyler
  Derr}.} \bibinfo{year}{2022}\natexlab{}.
\newblock \showarticletitle{Decision Boundaries of Deep Neural Networks}. In
  \bibinfo{booktitle}{\emph{2022 21st IEEE International Conference on Machine
  Learning and Applications (ICMLA)}}. \bibinfo{pages}{1085--1092}.
\newblock
\urldef\tempurl%
\url{https://doi.org/10.1109/ICMLA55696.2022.00179}
\showDOI{\tempurl}


\bibitem[Karimi et~al\mbox{.}(2019)]%
        {KarimiTD19}
\bibfield{author}{\bibinfo{person}{Hamid Karimi}, \bibinfo{person}{Tyler Derr},
  {and} \bibinfo{person}{Jiliang Tang}.} \bibinfo{year}{2019}\natexlab{}.
\newblock \showarticletitle{Characterizing the Decision Boundary of Deep Neural
  Networks}.
\newblock \bibinfo{journal}{\emph{CoRR}}  \bibinfo{volume}{abs/1912.11460}
  (\bibinfo{year}{2019}).
\newblock
\showeprint[arXiv]{1912.11460}
\urldef\tempurl%
\url{http://arxiv.org/abs/1912.11460}
\showURL{%
\tempurl}


\bibitem[Katz et~al\mbox{.}(2019)]%
        {KatzHIJLLSTWZDK19}
\bibfield{author}{\bibinfo{person}{Guy Katz}, \bibinfo{person}{Derek~A. Huang},
  \bibinfo{person}{Duligur Ibeling}, \bibinfo{person}{Kyle Julian},
  \bibinfo{person}{Christopher Lazarus}, \bibinfo{person}{Rachel Lim},
  \bibinfo{person}{Parth Shah}, \bibinfo{person}{Shantanu Thakoor},
  \bibinfo{person}{Haoze Wu}, \bibinfo{person}{Aleksandar Zeljic},
  \bibinfo{person}{David~L. Dill}, \bibinfo{person}{Mykel~J. Kochenderfer},
  {and} \bibinfo{person}{Clark~W. Barrett}.} \bibinfo{year}{2019}\natexlab{}.
\newblock \showarticletitle{The Marabou Framework for Verification and Analysis
  of Deep Neural Networks}. In \bibinfo{booktitle}{\emph{Computer Aided
  Verification - 31st International Conference, {CAV} 2019, New York City, NY,
  USA, July 15-18, 2019, Proceedings, Part {I}}}
  \emph{(\bibinfo{series}{Lecture Notes in Computer Science},
  Vol.~\bibinfo{volume}{11561})}, \bibfield{editor}{\bibinfo{person}{Isil
  Dillig} {and} \bibinfo{person}{Serdar Tasiran}} (Eds.).
  \bibinfo{publisher}{Springer}, \bibinfo{pages}{443--452}.
\newblock
\urldef\tempurl%
\url{https://doi.org/10.1007/978-3-030-25540-4\_26}
\showDOI{\tempurl}


\bibitem[Krizhevsky et~al\mbox{.}(2009)]%
        {krizhevsky2009learning}
\bibfield{author}{\bibinfo{person}{Alex Krizhevsky}, \bibinfo{person}{Geoffrey
  Hinton}, {et~al\mbox{.}}} \bibinfo{year}{2009}\natexlab{}.
\newblock \showarticletitle{Learning multiple layers of features from tiny
  images}.
\newblock  (\bibinfo{year}{2009}).
\newblock


\bibitem[LeCun et~al\mbox{.}(2015)]%
        {LeCunBH15}
\bibfield{author}{\bibinfo{person}{Yann LeCun}, \bibinfo{person}{Yoshua
  Bengio}, {and} \bibinfo{person}{Geoffrey Hinton}.}
  \bibinfo{year}{2015}\natexlab{}.
\newblock \showarticletitle{Deep learning}.
\newblock \bibinfo{journal}{\emph{Nature}} \bibinfo{volume}{521},
  \bibinfo{number}{7553} (\bibinfo{year}{2015}), \bibinfo{pages}{436--444}.
\newblock
\urldef\tempurl%
\url{https://doi.org/10.1038/nature14539}
\showDOI{\tempurl}


\bibitem[Li et~al\mbox{.}(2017)]%
        {Li17}
\bibfield{author}{\bibinfo{person}{Hao Li}, \bibinfo{person}{Asim Kadav},
  \bibinfo{person}{Igor Durdanovic}, \bibinfo{person}{Hanan Samet}, {and}
  \bibinfo{person}{Hans~Peter Graf}.} \bibinfo{year}{2017}\natexlab{}.
\newblock \showarticletitle{Pruning Filters for Efficient ConvNets}. In
  \bibinfo{booktitle}{\emph{5th International Conference on Learning
  Representations, {ICLR} 2017}}. \bibinfo{publisher}{OpenReview.net}.
\newblock


\bibitem[Madry et~al\mbox{.}(2018)]%
        {MadryMSTV18}
\bibfield{author}{\bibinfo{person}{Aleksander Madry},
  \bibinfo{person}{Aleksandar Makelov}, \bibinfo{person}{Ludwig Schmidt},
  \bibinfo{person}{Dimitris Tsipras}, {and} \bibinfo{person}{Adrian Vladu}.}
  \bibinfo{year}{2018}\natexlab{}.
\newblock \showarticletitle{Towards Deep Learning Models Resistant to
  Adversarial Attacks}. In \bibinfo{booktitle}{\emph{6th International
  Conference on Learning Representations, {ICLR} 2018, Vancouver, BC, Canada,
  April 30 - May 3, 2018, Conference Track Proceedings}}.
\newblock


\bibitem[Moosavi{-}Dezfooli et~al\mbox{.}(2016)]%
        {Moosavi-Dezfooli16}
\bibfield{author}{\bibinfo{person}{Seyed{-}Mohsen Moosavi{-}Dezfooli},
  \bibinfo{person}{Alhussein Fawzi}, {and} \bibinfo{person}{Pascal Frossard}.}
  \bibinfo{year}{2016}\natexlab{}.
\newblock \showarticletitle{DeepFool: {A} Simple and Accurate Method to Fool
  Deep Neural Networks}. In \bibinfo{booktitle}{\emph{2016 {IEEE} Conference on
  Computer Vision and Pattern Recognition, {CVPR} 2016, Las Vegas, NV, USA,
  June 27-30, 2016}}. \bibinfo{publisher}{{IEEE} Computer Society},
  \bibinfo{pages}{2574--2582}.
\newblock
\urldef\tempurl%
\url{https://doi.org/10.1109/CVPR.2016.282}
\showURL{%
\tempurl}


\bibitem[Odena et~al\mbox{.}(2019)]%
        {OdenaOAG19}
\bibfield{author}{\bibinfo{person}{Augustus Odena}, \bibinfo{person}{Catherine
  Olsson}, \bibinfo{person}{David~G. Andersen}, {and} \bibinfo{person}{Ian~J.
  Goodfellow}.} \bibinfo{year}{2019}\natexlab{}.
\newblock \showarticletitle{TensorFuzz: Debugging Neural Networks with
  Coverage-Guided Fuzzing}. In \bibinfo{booktitle}{\emph{Proceedings of the
  36th International Conference on Machine Learning, {ICML} 2019, 9-15 June
  2019, Long Beach, California, {USA}}} \emph{(\bibinfo{series}{Proceedings of
  Machine Learning Research}, Vol.~\bibinfo{volume}{97})},
  \bibfield{editor}{\bibinfo{person}{Kamalika Chaudhuri} {and}
  \bibinfo{person}{Ruslan Salakhutdinov}} (Eds.). \bibinfo{publisher}{{PMLR}},
  \bibinfo{pages}{4901--4911}.
\newblock
\urldef\tempurl%
\url{http://proceedings.mlr.press/v97/odena19a.html}
\showURL{%
\tempurl}


\bibitem[Otter et~al\mbox{.}(2020)]%
        {otter20}
\bibfield{author}{\bibinfo{person}{Daniel~W Otter}, \bibinfo{person}{Julian~R
  Medina}, {and} \bibinfo{person}{Jugal~K Kalita}.}
  \bibinfo{year}{2020}\natexlab{}.
\newblock \showarticletitle{A survey of the usages of deep learning for natural
  language processing}.
\newblock \bibinfo{journal}{\emph{IEEE transactions on neural networks and
  learning systems}} \bibinfo{volume}{32}, \bibinfo{number}{2}
  (\bibinfo{year}{2020}), \bibinfo{pages}{604--624}.
\newblock


\bibitem[Oyallon(2017)]%
        {Oyallon17}
\bibfield{author}{\bibinfo{person}{Edouard Oyallon}.}
  \bibinfo{year}{2017}\natexlab{}.
\newblock \showarticletitle{Building a Regular Decision Boundary with Deep
  Networks}. In \bibinfo{booktitle}{\emph{2017 {IEEE} Conference on Computer
  Vision and Pattern Recognition, {CVPR} 2017, Honolulu, HI, USA, July 21-26,
  2017}}. \bibinfo{publisher}{{IEEE} Computer Society},
  \bibinfo{pages}{1886--1894}.
\newblock
\urldef\tempurl%
\url{https://doi.org/10.1109/CVPR.2017.204}
\showDOI{\tempurl}


\bibitem[Pan and Rajan(2020)]%
        {PanR20}
\bibfield{author}{\bibinfo{person}{Rangeet Pan} {and} \bibinfo{person}{Hridesh
  Rajan}.} \bibinfo{year}{2020}\natexlab{}.
\newblock \showarticletitle{On decomposing a deep neural network into modules}.
  In \bibinfo{booktitle}{\emph{{ESEC/FSE} '20: 28th {ACM} Joint European
  Software Engineering Conference and Symposium on the Foundations of Software
  Engineering, Virtual Event, USA, November 8-13, 2020}},
  \bibfield{editor}{\bibinfo{person}{Prem Devanbu}, \bibinfo{person}{Myra~B.
  Cohen}, {and} \bibinfo{person}{Thomas Zimmermann}} (Eds.).
  \bibinfo{publisher}{{ACM}}, \bibinfo{pages}{889--900}.
\newblock
\urldef\tempurl%
\url{https://doi.org/10.1145/3368089.3409668}
\showDOI{\tempurl}


\bibitem[Pan and Rajan(2022)]%
        {PanR22}
\bibfield{author}{\bibinfo{person}{Rangeet Pan} {and} \bibinfo{person}{Hridesh
  Rajan}.} \bibinfo{year}{2022}\natexlab{}.
\newblock \showarticletitle{Decomposing Convolutional Neural Networks into
  Reusable and Replaceable Modules}. In \bibinfo{booktitle}{\emph{44th
  {IEEE/ACM} 44th International Conference on Software Engineering, {ICSE}
  2022, Pittsburgh, PA, USA, May 25-27, 2022}}. \bibinfo{publisher}{{ACM}},
  \bibinfo{pages}{524--535}.
\newblock
\urldef\tempurl%
\url{https://doi.org/10.1145/3510003.3510051}
\showDOI{\tempurl}


\bibitem[Ren et~al\mbox{.}(2023)]%
        {Ren23}
\bibfield{author}{\bibinfo{person}{Xiaoning Ren}, \bibinfo{person}{Yun Lin},
  \bibinfo{person}{Yinxing Xue}, \bibinfo{person}{Ruofan Liu},
  \bibinfo{person}{Jun Sun}, \bibinfo{person}{Zhiyong Feng}, {and}
  \bibinfo{person}{Jin~Song Dong}.} \bibinfo{year}{2023}\natexlab{}.
\newblock \showarticletitle{DeepArc: Modularizing Neural Networks for the Model
  Maintenance}. In \bibinfo{booktitle}{\emph{45th {IEEE/ACM} International
  Conference on Software Engineering, {ICSE} 2023}}.
  \bibinfo{publisher}{{IEEE}}, \bibinfo{pages}{1008--1019}.
\newblock


\bibitem[Sabour et~al\mbox{.}(2017)]%
        {10.5555/3294996.3295142}
\bibfield{author}{\bibinfo{person}{Sara Sabour}, \bibinfo{person}{Nicholas
  Frosst}, {and} \bibinfo{person}{Geoffrey~E. Hinton}.}
  \bibinfo{year}{2017}\natexlab{}.
\newblock \showarticletitle{Dynamic routing between capsules}. In
  \bibinfo{booktitle}{\emph{Proceedings of the 31st International Conference on
  Neural Information Processing Systems}} (Long Beach, California, USA)
  \emph{(\bibinfo{series}{NIPS'17})}. \bibinfo{publisher}{Curran Associates
  Inc.}, \bibinfo{address}{Red Hook, NY, USA}, \bibinfo{pages}{3859–3869}.
\newblock
\showISBNx{9781510860964}


\bibitem[Salman et~al\mbox{.}(2019)]%
        {salman2019convex}
\bibfield{author}{\bibinfo{person}{Hadi Salman}, \bibinfo{person}{Greg Yang},
  \bibinfo{person}{Huan Zhang}, \bibinfo{person}{Cho-Jui Hsieh}, {and}
  \bibinfo{person}{Pengchuan Zhang}.} \bibinfo{year}{2019}\natexlab{}.
\newblock \showarticletitle{A Convex Relaxation Barrier to Tight Robustness
  Verification of Neural Networks}. In \bibinfo{booktitle}{\emph{Advances in
  Neural Information Processing Systems}},
  \bibfield{editor}{\bibinfo{person}{H.~Wallach},
  \bibinfo{person}{H.~Larochelle}, \bibinfo{person}{A.~Beygelzimer},
  \bibinfo{person}{F.~d\textquotesingle Alch\'{e}-Buc},
  \bibinfo{person}{E.~Fox}, {and} \bibinfo{person}{R.~Garnett}} (Eds.),
  Vol.~\bibinfo{volume}{32}. \bibinfo{publisher}{Curran Associates, Inc.}
\newblock
\urldef\tempurl%
\url{https://proceedings.neurips.cc/paper_files/paper/2019/file/246a3c5544feb054f3ea718f61adfa16-Paper.pdf}
\showURL{%
\tempurl}


\bibitem[Sehwag et~al\mbox{.}(2020)]%
        {SehwagWMJ20}
\bibfield{author}{\bibinfo{person}{Vikash Sehwag}, \bibinfo{person}{Shiqi
  Wang}, \bibinfo{person}{Prateek Mittal}, {and} \bibinfo{person}{Suman Jana}.}
  \bibinfo{year}{2020}\natexlab{}.
\newblock \showarticletitle{HYDRA: Pruning Adversarially Robust Neural
  Networks}. In \bibinfo{booktitle}{\emph{Advances in Neural Information
  Processing Systems}}, \bibfield{editor}{\bibinfo{person}{H.~Larochelle},
  \bibinfo{person}{M.~Ranzato}, \bibinfo{person}{R.~Hadsell},
  \bibinfo{person}{M.F. Balcan}, {and} \bibinfo{person}{H.~Lin}} (Eds.),
  Vol.~\bibinfo{volume}{33}. \bibinfo{publisher}{Curran Associates, Inc.},
  \bibinfo{pages}{19655--19666}.
\newblock
\urldef\tempurl%
\url{https://proceedings.neurips.cc/paper_files/paper/2020/file/e3a72c791a69f87b05ea7742e04430ed-Paper.pdf}
\showURL{%
\tempurl}


\bibitem[Shamshad et~al\mbox{.}(2023)]%
        {Shamshad23}
\bibfield{author}{\bibinfo{person}{Fahad Shamshad}, \bibinfo{person}{Salman~H.
  Khan}, \bibinfo{person}{Syed~Waqas Zamir}, \bibinfo{person}{Muhammad~Haris
  Khan}, \bibinfo{person}{Munawar Hayat}, \bibinfo{person}{Fahad~Shahbaz Khan},
  {and} \bibinfo{person}{Huazhu Fu}.} \bibinfo{year}{2023}\natexlab{}.
\newblock \showarticletitle{Transformers in medical imaging: {A} survey}.
\newblock \bibinfo{journal}{\emph{Medical Image Anal.}}  \bibinfo{volume}{88}
  (\bibinfo{year}{2023}), \bibinfo{pages}{102802}.
\newblock


\bibitem[Shazeer et~al\mbox{.}(2017)]%
        {shazeer2017}
\bibfield{author}{\bibinfo{person}{Noam Shazeer}, \bibinfo{person}{Azalia
  Mirhoseini}, \bibinfo{person}{Krzysztof Maziarz}, \bibinfo{person}{Andy
  Davis}, \bibinfo{person}{Quoc~V. Le}, \bibinfo{person}{Geoffrey~E. Hinton},
  {and} \bibinfo{person}{Jeff Dean}.} \bibinfo{year}{2017}\natexlab{}.
\newblock \showarticletitle{Outrageously Large Neural Networks: The
  Sparsely-Gated Mixture-of-Experts Layer}. In \bibinfo{booktitle}{\emph{5th
  International Conference on Learning Representations, {ICLR} 2017, Toulon,
  France, April 24-26, 2017, Conference Track Proceedings}}.
  \bibinfo{publisher}{OpenReview.net}.
\newblock
\urldef\tempurl%
\url{https://openreview.net/forum?id=B1ckMDqlg}
\showURL{%
\tempurl}


\bibitem[She et~al\mbox{.}(2019)]%
        {8835342}
\bibfield{author}{\bibinfo{person}{Dongdong She}, \bibinfo{person}{Kexin Pei},
  \bibinfo{person}{Dave Epstein}, \bibinfo{person}{Junfeng Yang},
  \bibinfo{person}{Baishakhi Ray}, {and} \bibinfo{person}{Suman Jana}.}
  \bibinfo{year}{2019}\natexlab{}.
\newblock \showarticletitle{NEUZZ: Efficient Fuzzing with Neural Program
  Smoothing}. In \bibinfo{booktitle}{\emph{2019 IEEE Symposium on Security and
  Privacy (SP)}}. \bibinfo{pages}{803--817}.
\newblock
\urldef\tempurl%
\url{https://doi.org/10.1109/SP.2019.00052}
\showDOI{\tempurl}


\bibitem[Singh et~al\mbox{.}(2018)]%
        {SinghGMPV18}
\bibfield{author}{\bibinfo{person}{Gagandeep Singh}, \bibinfo{person}{Timon
  Gehr}, \bibinfo{person}{Matthew Mirman}, \bibinfo{person}{Markus
  P{\"{u}}schel}, {and} \bibinfo{person}{Martin~T. Vechev}.}
  \bibinfo{year}{2018}\natexlab{}.
\newblock \showarticletitle{Fast and Effective Robustness Certification}. In
  \bibinfo{booktitle}{\emph{Advances in Neural Information Processing Systems
  31: Annual Conference on Neural Information Processing Systems 2018, NeurIPS
  2018, December 3-8, 2018, Montr{\'{e}}al, Canada}},
  \bibfield{editor}{\bibinfo{person}{Samy Bengio}, \bibinfo{person}{Hanna~M.
  Wallach}, \bibinfo{person}{Hugo Larochelle}, \bibinfo{person}{Kristen
  Grauman}, \bibinfo{person}{Nicol{\`{o}} Cesa{-}Bianchi}, {and}
  \bibinfo{person}{Roman Garnett}} (Eds.). \bibinfo{pages}{10825--10836}.
\newblock


\bibitem[Singh et~al\mbox{.}(2019)]%
        {SinghGPV19}
\bibfield{author}{\bibinfo{person}{Gagandeep Singh}, \bibinfo{person}{Timon
  Gehr}, \bibinfo{person}{Markus P{\"{u}}schel}, {and}
  \bibinfo{person}{Martin~T. Vechev}.} \bibinfo{year}{2019}\natexlab{}.
\newblock \showarticletitle{An abstract domain for certifying neural networks}.
\newblock \bibinfo{journal}{\emph{Proc. {ACM} Program. Lang.}}
  \bibinfo{volume}{3}, \bibinfo{number}{{POPL}} (\bibinfo{year}{2019}),
  \bibinfo{pages}{41:1--41:30}.
\newblock


\bibitem[Somepalli et~al\mbox{.}(2022)]%
        {SomepalliFBCDBG22}
\bibfield{author}{\bibinfo{person}{Gowthami Somepalli}, \bibinfo{person}{Liam
  Fowl}, \bibinfo{person}{Arpit Bansal}, \bibinfo{person}{Ping{-}Yeh Chiang},
  \bibinfo{person}{Yehuda Dar}, \bibinfo{person}{Richard~G. Baraniuk},
  \bibinfo{person}{Micah Goldblum}, {and} \bibinfo{person}{Tom Goldstein}.}
  \bibinfo{year}{2022}\natexlab{}.
\newblock \showarticletitle{Can Neural Nets Learn the Same Model Twice?
  Investigating Reproducibility and Double Descent from the Decision Boundary
  Perspective}. In \bibinfo{booktitle}{\emph{{IEEE/CVF} Conference on Computer
  Vision and Pattern Recognition, {CVPR} 2022, New Orleans, LA, USA, June
  18-24, 2022}}. \bibinfo{publisher}{{IEEE}}, \bibinfo{pages}{13689--13698}.
\newblock
\urldef\tempurl%
\url{https://doi.org/10.1109/CVPR52688.2022.01333}
\showDOI{\tempurl}


\bibitem[Sun et~al\mbox{.}(2024)]%
        {Sun0BK24}
\bibfield{author}{\bibinfo{person}{Mingjie Sun}, \bibinfo{person}{Zhuang Liu},
  \bibinfo{person}{Anna Bair}, {and} \bibinfo{person}{J.~Zico Kolter}.}
  \bibinfo{year}{2024}\natexlab{}.
\newblock \showarticletitle{A Simple and Effective Pruning Approach for Large
  Language Models}. In \bibinfo{booktitle}{\emph{The Twelfth International
  Conference on Learning Representations, {ICLR} 2024, Vienna, Austria, May
  7-11, 2024}}. \bibinfo{publisher}{OpenReview.net}.
\newblock
\urldef\tempurl%
\url{https://openreview.net/forum?id=PxoFut3dWW}
\showURL{%
\tempurl}


\bibitem[Touvron et~al\mbox{.}(2021)]%
        {pmlr-v139-touvron21a}
\bibfield{author}{\bibinfo{person}{Hugo Touvron}, \bibinfo{person}{Matthieu
  Cord}, \bibinfo{person}{Matthijs Douze}, \bibinfo{person}{Francisco Massa},
  \bibinfo{person}{Alexandre Sablayrolles}, {and} \bibinfo{person}{Herve
  Jegou}.} \bibinfo{year}{2021}\natexlab{}.
\newblock \showarticletitle{Training data-efficient image transformers \&amp;
  distillation through attention}. In \bibinfo{booktitle}{\emph{Proceedings of
  the 38th International Conference on Machine Learning}}
  \emph{(\bibinfo{series}{Proceedings of Machine Learning Research},
  Vol.~\bibinfo{volume}{139})}, \bibfield{editor}{\bibinfo{person}{Marina
  Meila} {and} \bibinfo{person}{Tong Zhang}} (Eds.). \bibinfo{publisher}{PMLR},
  \bibinfo{pages}{10347--10357}.
\newblock
\urldef\tempurl%
\url{https://proceedings.mlr.press/v139/touvron21a.html}
\showURL{%
\tempurl}


\bibitem[Ugare et~al\mbox{.}(2022)]%
        {Shubham22}
\bibfield{author}{\bibinfo{person}{Shubham Ugare}, \bibinfo{person}{Gagandeep
  Singh}, {and} \bibinfo{person}{Sasa Misailovic}.}
  \bibinfo{year}{2022}\natexlab{}.
\newblock \showarticletitle{Proof transfer for fast certification of multiple
  approximate neural networks}.
\newblock \bibinfo{journal}{\emph{Proc. ACM Program. Lang.}}
  \bibinfo{volume}{6}, \bibinfo{number}{OOPSLA1}, Article
  \bibinfo{articleno}{75} (\bibinfo{date}{apr} \bibinfo{year}{2022}),
  \bibinfo{numpages}{29}~pages.
\newblock
\urldef\tempurl%
\url{https://doi.org/10.1145/3527319}
\showDOI{\tempurl}


\bibitem[Vaswani et~al\mbox{.}(2017)]%
        {VaswaniSUAJPGP17}
\bibfield{author}{\bibinfo{person}{Ashish Vaswani}, \bibinfo{person}{Noam
  Shazeer}, \bibinfo{person}{Niki Parmar}, \bibinfo{person}{Jakob Uszkoreit},
  \bibinfo{person}{Llion Jones}, \bibinfo{person}{Aidan~N Gomez},
  \bibinfo{person}{\L~ukasz Kaiser}, {and} \bibinfo{person}{Illia Polosukhin}.}
  \bibinfo{year}{2017}\natexlab{}.
\newblock \showarticletitle{Attention is All you Need}. In
  \bibinfo{booktitle}{\emph{Advances in Neural Information Processing
  Systems}}, \bibfield{editor}{\bibinfo{person}{I.~Guyon},
  \bibinfo{person}{U.~Von Luxburg}, \bibinfo{person}{S.~Bengio},
  \bibinfo{person}{H.~Wallach}, \bibinfo{person}{R.~Fergus},
  \bibinfo{person}{S.~Vishwanathan}, {and} \bibinfo{person}{R.~Garnett}}
  (Eds.), Vol.~\bibinfo{volume}{30}. \bibinfo{publisher}{Curran Associates,
  Inc.}
\newblock
\urldef\tempurl%
\url{https://proceedings.neurips.cc/paper_files/paper/2017/file/3f5ee243547dee91fbd053c1c4a845aa-Paper.pdf}
\showURL{%
\tempurl}


\bibitem[Wang et~al\mbox{.}(2021)]%
        {wang2021beta}
\bibfield{author}{\bibinfo{person}{Shiqi Wang}, \bibinfo{person}{Huan Zhang},
  \bibinfo{person}{Kaidi Xu}, \bibinfo{person}{Xue Lin}, \bibinfo{person}{Suman
  Jana}, \bibinfo{person}{Cho-Jui Hsieh}, {and} \bibinfo{person}{J.~Zico
  Kolter}.} \bibinfo{year}{2021}\natexlab{}.
\newblock \showarticletitle{Beta-CROWN: Efficient Bound Propagation with
  Per-neuron Split Constraints for Neural Network Robustness Verification}. In
  \bibinfo{booktitle}{\emph{Advances in Neural Information Processing
  Systems}}, \bibfield{editor}{\bibinfo{person}{M.~Ranzato},
  \bibinfo{person}{A.~Beygelzimer}, \bibinfo{person}{Y.~Dauphin},
  \bibinfo{person}{P.S. Liang}, {and} \bibinfo{person}{J.~Wortman Vaughan}}
  (Eds.), Vol.~\bibinfo{volume}{34}. \bibinfo{publisher}{Curran Associates,
  Inc.}, \bibinfo{pages}{29909--29921}.
\newblock
\urldef\tempurl%
\url{https://proceedings.neurips.cc/paper_files/paper/2021/file/fac7fead96dafceaf80c1daffeae82a4-Paper.pdf}
\showURL{%
\tempurl}


\bibitem[Wu et~al\mbox{.}(2024)]%
        {WuIZTDKRAJBHLWZKKB24}
\bibfield{author}{\bibinfo{person}{Haoze Wu}, \bibinfo{person}{Omri Isac},
  \bibinfo{person}{Aleksandar Zeljic}, \bibinfo{person}{Teruhiro Tagomori},
  \bibinfo{person}{Matthew~L. Daggitt}, \bibinfo{person}{Wen Kokke},
  \bibinfo{person}{Idan Refaeli}, \bibinfo{person}{Guy Amir},
  \bibinfo{person}{Kyle Julian}, \bibinfo{person}{Shahaf Bassan},
  \bibinfo{person}{Pei Huang}, \bibinfo{person}{Ori Lahav},
  \bibinfo{person}{Min Wu}, \bibinfo{person}{Min Zhang},
  \bibinfo{person}{Ekaterina Komendantskaya}, \bibinfo{person}{Guy Katz}, {and}
  \bibinfo{person}{Clark~W. Barrett}.} \bibinfo{year}{2024}\natexlab{}.
\newblock \showarticletitle{Marabou 2.0: {A} Versatile Formal Analyzer of
  Neural Networks}. In \bibinfo{booktitle}{\emph{Computer Aided Verification -
  36th International Conference, {CAV} 2024, Montreal, QC, Canada, July 24-27,
  2024, Proceedings, Part {II}}} \emph{(\bibinfo{series}{Lecture Notes in
  Computer Science}, Vol.~\bibinfo{volume}{14682})},
  \bibfield{editor}{\bibinfo{person}{Arie Gurfinkel} {and}
  \bibinfo{person}{Vijay Ganesh}} (Eds.). \bibinfo{publisher}{Springer},
  \bibinfo{pages}{249--264}.
\newblock
\urldef\tempurl%
\url{https://doi.org/10.1007/978-3-031-65630-9\_13}
\showDOI{\tempurl}


\bibitem[Xu et~al\mbox{.}(2020)]%
        {xu2020automatic}
\bibfield{author}{\bibinfo{person}{Kaidi Xu}, \bibinfo{person}{Zhouxing Shi},
  \bibinfo{person}{Huan Zhang}, \bibinfo{person}{Yihan Wang},
  \bibinfo{person}{Kai-Wei Chang}, \bibinfo{person}{Minlie Huang},
  \bibinfo{person}{Bhavya Kailkhura}, \bibinfo{person}{Xue Lin}, {and}
  \bibinfo{person}{Cho-Jui Hsieh}.} \bibinfo{year}{2020}\natexlab{}.
\newblock \showarticletitle{Automatic Perturbation Analysis for Scalable
  Certified Robustness and Beyond}. In \bibinfo{booktitle}{\emph{Advances in
  Neural Information Processing Systems}},
  \bibfield{editor}{\bibinfo{person}{H.~Larochelle},
  \bibinfo{person}{M.~Ranzato}, \bibinfo{person}{R.~Hadsell},
  \bibinfo{person}{M.F. Balcan}, {and} \bibinfo{person}{H.~Lin}} (Eds.),
  Vol.~\bibinfo{volume}{33}. \bibinfo{publisher}{Curran Associates, Inc.},
  \bibinfo{pages}{1129--1141}.
\newblock
\urldef\tempurl%
\url{https://proceedings.neurips.cc/paper_files/paper/2020/file/0cbc5671ae26f67871cb914d81ef8fc1-Paper.pdf}
\showURL{%
\tempurl}


\bibitem[Xu et~al\mbox{.}(2021)]%
        {xu2021fast}
\bibfield{author}{\bibinfo{person}{Kaidi Xu}, \bibinfo{person}{Huan Zhang},
  \bibinfo{person}{Shiqi Wang}, \bibinfo{person}{Yihan Wang},
  \bibinfo{person}{Suman Jana}, \bibinfo{person}{Xue Lin}, {and}
  \bibinfo{person}{Cho-Jui Hsieh}.} \bibinfo{year}{2021}\natexlab{}.
\newblock \showarticletitle{{Fast and Complete}: Enabling Complete Neural
  Network Verification with Rapid and Massively Parallel Incomplete Verifiers}.
  In \bibinfo{booktitle}{\emph{International Conference on Learning
  Representations}}.
\newblock
\urldef\tempurl%
\url{https://openreview.net/forum?id=nVZtXBI6LNn}
\showURL{%
\tempurl}


\bibitem[Xu et~al\mbox{.}(2023)]%
        {DBLP:conf/iclr/XuSGGH23}
\bibfield{author}{\bibinfo{person}{Yuancheng Xu}, \bibinfo{person}{Yanchao
  Sun}, \bibinfo{person}{Micah Goldblum}, \bibinfo{person}{Tom Goldstein},
  {and} \bibinfo{person}{Furong Huang}.} \bibinfo{year}{2023}\natexlab{}.
\newblock \showarticletitle{Exploring and Exploiting Decision Boundary Dynamics
  for Adversarial Robustness}. In \bibinfo{booktitle}{\emph{The Eleventh
  International Conference on Learning Representations, {ICLR} 2023, Kigali,
  Rwanda, May 1-5, 2023}}. \bibinfo{publisher}{OpenReview.net}.
\newblock
\urldef\tempurl%
\url{https://openreview.net/forum?id=aRTKuscKByJ}
\showURL{%
\tempurl}


\bibitem[Zhang et~al\mbox{.}(2018)]%
        {zhang2018efficient}
\bibfield{author}{\bibinfo{person}{Huan Zhang}, \bibinfo{person}{Tsui-Wei
  Weng}, \bibinfo{person}{Pin-Yu Chen}, \bibinfo{person}{Cho-Jui Hsieh}, {and}
  \bibinfo{person}{Luca Daniel}.} \bibinfo{year}{2018}\natexlab{}.
\newblock \showarticletitle{Efficient Neural Network Robustness Certification
  with General Activation Functions}. In \bibinfo{booktitle}{\emph{Advances in
  Neural Information Processing Systems}},
  \bibfield{editor}{\bibinfo{person}{S.~Bengio}, \bibinfo{person}{H.~Wallach},
  \bibinfo{person}{H.~Larochelle}, \bibinfo{person}{K.~Grauman},
  \bibinfo{person}{N.~Cesa-Bianchi}, {and} \bibinfo{person}{R.~Garnett}}
  (Eds.), Vol.~\bibinfo{volume}{31}. \bibinfo{publisher}{Curran Associates,
  Inc.}
\newblock
\urldef\tempurl%
\url{https://proceedings.neurips.cc/paper_files/paper/2018/file/d04863f100d59b3eb688a11f95b0ae60-Paper.pdf}
\showURL{%
\tempurl}


\bibitem[Zhang et~al\mbox{.}(2015)]%
        {ZhangZL15}
\bibfield{author}{\bibinfo{person}{Xiang Zhang}, \bibinfo{person}{Junbo~Jake
  Zhao}, {and} \bibinfo{person}{Yann LeCun}.} \bibinfo{year}{2015}\natexlab{}.
\newblock \showarticletitle{Character-level Convolutional Networks for Text
  Classification}. In \bibinfo{booktitle}{\emph{Advances in Neural Information
  Processing Systems 28: Annual Conference on Neural Information Processing
  Systems 2015, December 7-12, 2015, Montreal, Quebec, Canada}},
  \bibfield{editor}{\bibinfo{person}{Corinna Cortes}, \bibinfo{person}{Neil~D.
  Lawrence}, \bibinfo{person}{Daniel~D. Lee}, \bibinfo{person}{Masashi
  Sugiyama}, {and} \bibinfo{person}{Roman Garnett}} (Eds.).
  \bibinfo{pages}{649--657}.
\newblock
\urldef\tempurl%
\url{https://proceedings.neurips.cc/paper/2015/hash/250cf8b51c773f3f8dc8b4be867a9a02-Abstract.html}
\showURL{%
\tempurl}


\bibitem[Zhao et~al\mbox{.}(2024)]%
        {ZhaoNG24}
\bibfield{author}{\bibinfo{person}{Siyan Zhao}, \bibinfo{person}{Tung Nguyen},
  {and} \bibinfo{person}{Aditya Grover}.} \bibinfo{year}{2024}\natexlab{}.
\newblock \showarticletitle{Probing the Decision Boundaries of In-context
  Learning in Large Language Models}. In \bibinfo{booktitle}{\emph{Advances in
  Neural Information Processing Systems}},
  \bibfield{editor}{\bibinfo{person}{A.~Globerson},
  \bibinfo{person}{L.~Mackey}, \bibinfo{person}{D.~Belgrave},
  \bibinfo{person}{A.~Fan}, \bibinfo{person}{U.~Paquet},
  \bibinfo{person}{J.~Tomczak}, {and} \bibinfo{person}{C.~Zhang}} (Eds.),
  Vol.~\bibinfo{volume}{37}. \bibinfo{publisher}{Curran Associates, Inc.},
  \bibinfo{pages}{130408--130432}.
\newblock
\urldef\tempurl%
\url{https://doi.org/10.52202/079017-4144}
\showDOI{\tempurl}


\end{thebibliography}

\end{document}